\DeclareMathOperator\sign{sign}
\renewcommand{\mathcal}{\mathscr}
\newcommand{\fX}{\mathfrak{X}}
\newcommand{\EE}{\mathbb{E}}
\newcommand{\NN}{\mathbb{N}}
\newcommand{\RR}{\mathbb{R}}
\newcommand{\XX}{\mathbb{X}}
\newcommand{\YY}{\mathbb{Y}}
\renewcommand{\epsilon}{\varepsilon}
\renewcommand{\rho}{\varrho}
\newcommand{\set}{\triangleq}
\newcommand{\norm}[1]{ \left\| #1 \right\|}
\newtheorem{thm}{Theorem}[section]
\newtheorem{prop}[thm]{Proposition}
\newtheorem{cor}[thm]{Corollary}
\newtheorem{example}[thm]{Example}
\newtheorem{assumption}[thm]{Assumption}
\newtheorem{lem}[thm]{Lemma}
\newtheorem{remark}[thm]{Remark}
\theoremstyle{definition}
\newenvironment{rem}{\begin{remark}} { $\hfill \diamond$ \end{remark}}
\numberwithin{equation}{section}
\begin{document}

\renewcommand{\baselinestretch}{1.2} \normalsize

\title{Illiquidity and Derivative Valuation }
\author{Ulrich Horst, Felix Naujokat \\ \small Institut f\"ur Mathematik \\ \small Humboldt-Universit\"at Berlin \\ \small horst@mathematik.hu-berlin.de  \\ \small naujokat@mathematik.hu-berlin.de }
\date{\today}
\maketitle

\begin{abstract}
    In illiquid markets, option traders may have an incentive to increase their portfolio value by using their impact on the dynamics of the underlying. We provide a mathematical framework within which to value derivatives under market impact in a multi-player framework by introducing strategic interactions into the Almgren \& Chriss (2001) model. Specifically, we consider a financial market model with several strategically interacting players that hold European contingent claims and whose trading decisions have an impact on the price evolution of the underlying. We establish existence and uniqueness of equilibrium results and show that the equilibrium dynamics can be characterized in terms of a coupled system of possibly non-linear PDEs. For the linear cost function used in Almgren \& Chriss (2001), we obtain (semi) closed form solutions for risk neutral or CARA investors. Finally, we indicate how spread crossing costs discourage market manipulation.
\end{abstract}

\vspace{2mm}

\begin{center}
{\sc Preliminary Version - Comments Welcome}
\end{center}

\mbox{ }

\textbf{AMS classification}: 91B28, 91B70, 60K10

\textbf{JEL classification}: C73, G12, G13

\vspace{2mm}

\textbf{Keywords}: Stochastic differential games, illiquidity,
market impact, derivative valuation.

\newpage

    \section{Introduction}

Standard financial market models assume that asset prices follow an
exogenous stochastic process and that all transactions can be
settled at the prevailing price without any impact on market
dynamics. The assumption that all trades can be carried out at
exogenously given prices is appropriate for small investors that
trade only a negligible proportion of the overall daily trading
volume; it is not appropriate for institutional investors trading large blocks of shares over a short time span. The trading decisions of institutional investors are likely to move stock prices in an unfavorable direction and often induce significant trading costs. 

It is now widely recognized that (the lack of) liquidity is a major source of financial risk and there has been an increasing interest in mathematical models of illiquid financial markets.
Much of the literature on illiquidity focusses on either
optimal hedging and portfolio liquidation strategies for a single
large investor under market impact (Cetin, Jarrow \& Protter (2004), Alfonsi, Fruth \& Schied (2007), Rogers \& Singh (2007)),
predatory trading (Carlin, Lobo \& Viswanathan (2007), Schoeneborn \& Schied (2008)) and the role
of derivative securities including
the problem of market manipulation using options (Jarrow (1992), Kumar \& Seppi (1992)).
In an illiquid market, derivative traders have an
incentive to utilize their impact on the price dynamics of the
underlying in order move the option value in a favorable
direction\footnote{Gallmeyer \& Seppi (2000) provide some evidence
that in illiquid markets option traders are in fact able to increase
a derivative's value by moving the price of the underlying.}. It has been shown by Jarrow
(1994), for instance, that by introducing derivatives into an otherwise complete and
arbitrage-free market, manipulation strategies with a risk free gain
may appear, such as market corners and front runs. 
Schoenbucher \& Wilmott (2000) discuss an illiquid market
model where a large trader can influence the stock price with
vanishing costs and risk. They argue that the risk of manipulation
on the part of the large trader makes the small traders unwilling to
trade derivatives any more. In particular, they predict that the
option market breaks down. Our analysis indicates that markets do not
necessarily break down when stock price manipulation is costly as it
is in our model. 

While the aforementioned papers differ significantly in their degree of complexity, they all focus on a {\sl single player} framework. When multiple players are considered the analysis is typically confined to some form of ''stealth trading'' as in Carlin, Lobo \& Viswanathan (2007) and Schoeneborn \& Schied (2008), where liquidity providers try to benefit from the liquidity demand that comes from some ''large'' investor but no strategic interaction between liquidity suppliers and consumers is considered. In fact, so far only little work has been devoted to models with strategically interacting market participants. Vanden (2005) considers a pricing game in continuous time where the option issuer controls the volatility of the underlying but does not
incur liquidity or spread crossing costs. He derives a Nash equilibrium in the two player,
risk neutral case and shows that ''seemingly harmless derivatives,
such as ordinary bull spreads, offer incentives for manipulation
that are identical to those offered by digital options'' (p. 1892,
l. 36). Gallmeyer \& Seppi (2000) consider a
binomial model with three periods and finitely many risk neutral agents
holding call options on an illiquid underlying. Assuming a linear
permanent price impact and linear transaction costs, and assuming
that all agents are initially endowed with the same derivative they
prove the existence of Nash equilibrium trading strategy and
indicate how market manipulation can be reduced. 


We provide a general mathematical framework within which to value derivative
securities in illiquid markets under strategic interactions thereby  
extending the work of Gallmeyer \& Seppi (2000) in several directions. 
Specifically, we consider a pricing game between a finite number of
large investors (``players'') holding European claims written on an
illiquid stock. Their goal is to maximize expected utility at
maturity from trading the stock where their portfolio value
at maturity depends on the trading strategies of all the other
players, due to their impact on the dynamics of the underlying.
Following Almgren \& Chriss (2001) we assume that the players have a
permanent impact on stock prices and that all trades are settled at
the prevailing market price plus a liquidity premium. The liquidity
premium can be viewed as an instantaneous price impact that affects
transaction prices but not the value of the players' inventory. This
form of market impact modeling is analytically more tractable than
that of Obizhaeva \& Wang (2006) which also allows for temporary
price impacts and resilience effects. It has also been
adopted by, e.g., Carlin, Lobo \& Viswanathan (2007), Schoeneborn \& Schied (2008)
and many practitioners from the financial industry.

Our framework is flexible enough to allow for rather general
liquidity costs including the linear cost function of Almgren \&
Chriss (2001) and some form of spread crossing costs. We show that
when the market participants are risk neutral or have CARA utility
functions the pricing game has a unique Nash equilibrium in the
class of absolutely continuous trading strategies; existence results
for more general utility functions are given for the one player
case. We solve the problem of equilibrium pricing using techniques
from the theory of stochastic optimal control and stochastic
differential games. We show that the family of the players' value
functions can be characterized as the solution to a coupled system of
non-linear PDEs. Here we use a-priori estimates for Nash equilibria;
we prove that the system of PDEs has a unique classical solution
with bounded derivatives. It turns out that the equilibrium problem
can be solved in closed form for a specific market environment,
namely the linear cost structure used in Almgren \& Chriss (2001)
and risk neutral agents. We use this explicit solutions to state
some conditions which make manipulation unattractive or avoid it altogether.
For instance, we show that when the agents are risk neutral no market manipulation occurs in zero sum games, i.e., in a game between an option writer and an option issuer. Furthermore, we find that
the bid ask spread is important determinant of market manipulation. It turns out that the higher the spread, the less beneficial market manipulation: high spread
crossing costs make trading more costly and hence discourage
frequent re-balancing of portfolio positions.

This paper is organized as follows: We present the market model in
section 2. In chapter 3, we formulate the optimization problem,
derive a priori estimates for Nash equilibria and prove the
existence of a solution for one player with general utility
function. We solve the multi-player case in section 4 for risk
neutral and CARA agents. We use these solutions in chapter 5 to show
how market manipulation can be reduced. Section 6 concludes.

    \section{ The Model }

We adopt the market impact model of Schoeneborn \& Schied (2007)
with a finite set $J$ of {\sl agents}, or {\sl players}, trading a single stock whose price
process depends on the agent's trading strategies. Following Almgren
\& Chriss (2001) we shall assume that the players have a permanent
impact on asset prices and that all trades are settled at prevailing
market prices plus a liquidity premium which depends on the {\sl
change} in the players' portfolios. In order to be able to capture
changes in portfolio positions in an analytically tractable way, we
follow Almgren \& Chriss (2001) and Schoeneborn \& Schied (2007)
and restrict ourselves to absolutely continuous trading strategies.
Hence we consider only trading strategies from the class
    $$ \fX \set \left\{ X : [0,T] \mapsto \RR | X \text{ absolutely continuous, adapted and } X_0 = 0   \right\} $$
where we denote by $X^j_t$ the number of stock
shares held by player $j \in J \set \{ 1, ..., N \}$ at time $t \in
[0, T] $. We write $d X^j_t = \dot X^j_t dt$ and call $\dot X^j$ the
{\sl trading speed} of the player $j$.

%

\subsection{Price dynamics and the liquidity premium}

Our focus is on valuation schemes for derivatives with short
maturities under strategic market interactions. For short trading
periods it is appropriate to model the {\sl fundamental stock
price}, i.e., the value of the stock in the absence of any market
impact, as a Brownian Motion with volatility $(\sigma B_t)$. Market
impact is accounted for by assuming that the investors' accumulated
stock holdings $\sum_{i=1}^N X^i $ have a linear impact on the stock
process $(P_t)$ so that
    \begin{equation} \label{preisdynamik1}
        P_t = P_0 + \sigma B_t + \lambda \sum_{i=1}^N X^i_t
    \end{equation}
with a permanent impact parameter $\lambda > 0$. The linear
permanent impact is consistent with the work of Huberman \& Stanzl
(2004) who argued that linearity of the permanent price impact is
important to exclude quasi-arbitrage\footnote{There is some
empirical evidence that very large trades have a concave price
impact but this observation needs further validation.}.

A trade at time $t \in [0,T]$ is settled at a {\sl transaction
price} $\tilde P_t$ that includes an additional instantaneous price
impact, or {\sl liquidity premium}. Specifically,
\begin{equation} \label{preisdynamik2}
        \tilde P_t = P_t + g \left( \sum_{i=1}^N \dot X^i_t \right)
\end{equation}
with a {\sl cost function} $g$ that depends on the
instantaneous change $\sum_{i=1}^N \dot X^i $ in the agents'
position in a possibly non-linear manner. The liquidity premium
accounts for limited available liquidity, transaction costs, fees,
spread crossing costs, etc. Spread-crossing costs are of particular
importance and have not been considered in the previous literature
on market impact.

\begin{rem}
In our model the liquidity costs are the same for all traders and
depend only on the aggregate demand throughout the entire set of
agents. This captures situations where the agents trade through a
market maker or clearing house that reduces the trading costs by
collecting all orders and matching incoming demand and supply prior
to settling the outstanding balance $\sum_{i=1}^N \dot X^i_t$ at
market prices.
\end{rem}

We assume with no loss of generality that $g$ is normalized, i.e.,
$g(0) = 0$ and smooth. The following additional mild assumptions on $g$
will guarantee that the equilibrium pricing problem has a solution
for risk neutral and CARA investors.

\begin{assumption} \label{assumption_g}
\begin{itemize}
    \item The derivative $g'$ is bounded away from zero, that is $g' > \epsilon >
    0$.
    \item The mapping $z \mapsto g(z) +  z g'(z) $ is strictly
    increasing.
\end{itemize}
\end{assumption}
The first condition is natural for a cost function. Since  the
liquidity costs associated with a net change in the overall position
$z$ is given by $z g(z)$, the second assumption states that the
agents face increasing marginal costs of trading. Our assumptions on
$g$ are satisfied for the following important examples:

\begin{example}
Cost functions which satisfy Assumption \ref{assumption_g} are the
linear cost function $g(z) = \kappa z$ with $\kappa > 0$, used in
Almgren \& Chriss (2001) and cost functions of the form
\[
    g(z) = \kappa z + s \frac{2}{\pi} \arctan(C z) \quad \mbox{with} \quad s, C >
    0.
\]
The former is the cost function associated with a block-shaped limit
order book. The latter can be viewed as a smooth approximation of the
map $z \mapsto \kappa z + s \cdot \sign(z)$ which is the cost
function associated with a block-shaped limit order book and
spread $s > 0$.
\end{example}

\subsection{Preferences and endowments}

Each agent is initially endowed with a contingent claim $H^j = H^j
(P_T)$, whose payoff depends on the stock price $P_T$ at maturity.
Although it is not always necessary we assume that the functions
$H^j$ are smooth and bounded with bounded derivatives $H^j_p$.

\begin{rem}
We only consider options with cash delivery. The assumption of cash
delivery is key. While cash settlement is susceptible to market
manipulation, we show in Section 5 below that when deals are settled
physically, i.e., when the option issuer delivers the underlying,
market manipulation is not beneficial: the cost of acquiring at an
increased price outweighs the benefits from a possible higher option
payoff, due to an increase in the underlying.
\end{rem}

We model the risk preference of the agent $j \in J$ with a von
Neumann - Morgenstern utility function $u^j$ and assume that her aim
is to maximize her expected portfolio value at maturity from trading
in the financial market so the agent's optimization problem is given
by:
    \begin{equation} \label{orig_problem}
        \sup_{ X^j \in \fX } \EE \left[ u^j \left( - \int_0^T \dot X^j_t \tilde P_t dt + H^j(P_T) + value^j(X^j_T) \right) \right]
        .
    \end{equation}
The portfolio value consists of the trading costs $- \int_0^T \dot
X^j_t \tilde P_t dt$, the option payoff $H^j(P_T)$ and the
liquidation value $value^j(X^j_T)$ of $X^j_T$ stock shares at
maturity. Rigorously defining a form of liquidation value in a
multi-player framework is a question of its own mathematical
interest, and is not the focus of this paper. The problem of defining a liquidation value in a single-player framework has been solved in a recent paper by Schoeneborn \& Schied (2008); in a
game-theoretic setting the problem is much more involved.
The agents optimize against their beliefs about the other
players' individual assessments of their respective portfolio values. In order to simplify the analysis we work under Assumption \ref{assumption} below. 

\begin{assumption} \label{assumption}
All agents optimize their utility assuming that
\begin{equation} \label{value}
    value^j(X^j_T) = \int_0^T \dot X^j_t P_t dt.
\end{equation}
\end{assumption}

The preceding assumption is motivated by the single player, risk-neutral framework where (\ref{value}) holds in expected values. In such a setting it states that the expected costs (utility) of building up a portfolio over the time span $[0,T]$ under market impact equals the expected liquidation costs (utility) under infinitely slow liquidation. In a multi-player model Assumption \ref{assumption} is only a first benchmark that simplifies the subsequent analysis even if all player are risk-neutral. The assumption is nonetheless useful as it allows us to carry out our analysis of equilibrium and to derive some insight into the structure of market dynamics under strategic interactions. 

Under Assumption \ref{assumption} the individual optimization problem
(\ref{orig_problem}) reduces to
\begin{equation} \label{problem}
    \sup_{ X^j \in \fX } \EE \left[ u^j \left( - \int_0^T \dot X^j_t g
    \left( \sum_{i = 1}^N \dot X^i_t \right) dt + H^j(P_T) \right)
    \right] .
\end{equation}
We say that a vector of strategies $\left( \dot X^1, ..., \dot X^N
\right) $ is a \textit{Nash equilibrium} if for each agent $j \in J$ her
trading strategy $X^j$ is a best response against the behavior of
all the other players, i.e., if $X^j$ solves (\ref{problem}), given
the vector $X^{-j} \set (X^i)_{i \neq j}$. In the following section we
derive a-priori estimates for equilibrium trading strategies and use
standard results from the theory of stochastic optimization to show
that Nash equilibria can be characterized in terms of a coupled
system of partial differential equations (PDEs). For the special
case of risk neutral and CARA investors we show that the system of
PDEs has a solution so that a unique (in a certain class)
equilibrium exists.

\section{Equilibrium Dynamics and A-Priori Estimates}
In this section we formulate the optimization problem
(\ref{problem}) as a stochastic control problem, derive the
associated Hamilton-Jacobi-Bellman-equations, HJB for short, and
transform it into a system of coupled PDEs. To this end, we choose
the stock price $P$ and the trading costs $R^j$ of the agent $j \in
J$ as state variables. They evolve according to:
\begin{eqnarray}
    d P_t &=& \sigma d B_t + \lambda \sum_{i=1}^N \dot X^i_t dt, \quad P_0 = p_0 \label{state1} \\
    d R^j_t &=& \dot X^j_t g \left( \sum_{i=1}^N \dot X^i_t \right) dt, \quad R^j_0 = 0. \label{state2}
\end{eqnarray}
For a given time $t < T$, spot price $p$ and trading costs $r$ the
{\sl value function} of the player $j$, defined by
    \begin{equation} \label{valuefunction}
    V^j (t, p, r) \set \sup_{ X^j \in \fX } \EE_t \left[ u^j \left( - r^j - \int_t^T \dot X^j_s g \left( \sum_{i = 1}^N \dot X^i_s \right) ds + H^j(P_T) \right)  \mid P_t = p
    \right],
    \end{equation}
denotes the maximal expected portfolio value at maturity that the player
can achieve by trading the underlying. The associated HJB-equation
is
\begin{equation} \label{HJB}
    \left\{
    \begin{aligned}
        0 & = v^j_t + \frac{1}{2} \sigma^2 v^j_{pp} +
        	\sup_{c^j} \left[ \lambda \left( c^j + \dot X^{-j} \right) v^j_p +
        	c^j g \left( c^j + \dot X^{-j} \right) v^j_{r^j}
      	 \right] \\
        v^j(T, p, r) & = u^j \left( - r^j + H^j(p) \right)
    \end{aligned}
    \right.
\end{equation}
cf. Fleming \& Soner (1993). Here we have used $v^j_{r^i} \equiv 0$
for $i \neq j$. Given the trading strategies $X^{-j}$ of all the
other agents, a candidate for the maximizer $c^j = \dot X^j $ should
satisfy
\begin{equation} \label{opt_cond}
    0 = \lambda \frac{v_p^j}{v^j_{r^j}} + g \left( c^j + \dot X^{-j} \right) + c^j g' \left( c^j + \dot X^{-j} \right)
\end{equation}
provided $v^j_{r^j} \neq 0$. We sum up these equations over the set
of players in order to get the following characterization for the
cumulated equilibrium trading speed $ \sum_{i = 1}^N \dot X^i_t $:
\begin{equation} \label{opt_cond_sum}
    0 = \lambda \sum_{i=1}^N \frac{v_p^i}{v^i_{r^i}} + N g \left(
    \sum_{i = 1}^N \dot X^i_t \right) + \left( \sum_{i = 1}^N \dot X^i_t
    \right) g' \left( \sum_{i = 1}^N \dot X^i_t \right).
\end{equation}
Due to Assumption \ref{assumption_g}, $z \mapsto N g(z) + z g'(z)$ is strictly increasing. Hence, equation
(\ref{opt_cond_sum}) admits a unique solution $ \dot X^* = \dot X^*
\left( \sum_{i=1}^N \frac{v_p^i}{v^i_{r^i}} \right)$. Plugging this
solution back into (\ref{opt_cond}) allows to compute the optimal
strategy $c^j = \dot X^j$ in terms of $v^i_p$ and $v^i_{r^i}$ as
\begin{equation} \label{strategy_j}
    c^j = \dot X^j = - \frac{1}{g' \left( \dot X^* \right) } \left[ \lambda \frac{v^j_p}{v^j_{r^j}} + g ( \dot X^* ) \right] .
\end{equation}
This expression is well defined if $v^j_{r^j} \neq 0$ because $g'
> 0$. To conclude, we have turned the family of individual HJB-equations (\ref{HJB}) into
the following system of coupled PDEs for $j = 1, ..., N$:
\begin{equation} \label{PDE_system}
    \left\{
    \begin{aligned}
    0 & = v^j_t + \frac{1}{2} \sigma^2 v^j_{pp} + \lambda \dot X^* v^j_p +
    \dot X^j g \left( \dot X^* \right) v^j_{r^j} \\
    v^j(T, p, r) & = u^j \left( - r^j + H^j(p) \right)
    \end{aligned}
    \right.
\end{equation}
where the coupling stems from the aggregate trading speed $ \dot X^*
= \dot X^* \left( \sum_{i=1}^N \frac{v_p^i}{v^i_{r^i}} \right)$.

Solving the system (\ref{PDE_system}) is delicate, to say the least.
The problem is the non-linearity coming from the expression $\dot X^j g
\left( \dot X^* \right)$ along with the implicit dependence of $\dot
X^j$ on the derivatives $v^i_p$ and $v^i_{r^i}$ and the fact
that $v^i_{r^i}$ appears in the denominator. The latter problem can
be coped with by assuming that (after a possible monotone
transformation) the agents' risk preferences satisfy a translation
property so that $u_r = 1$. A large class of such utility functions
can be linked to backward stochastic differential equations but we
choose not to embed our work into that line of research. We consider
instead the case of risk neutral and CARA investors where the
existence of a unique classical solution to the system
(\ref{PDE_system}) is guaranteed without any reference to backward
equations. The proof uses the following a-priori
estimates for the optimal trading strategies. It states that in equilibrium the trading speed is bounded. As a result, the agents' utilities from trading and the value function associated with their respective HJB equations along with (as we shall see) their derivatives are bounded.


\begin{lem} \label{boundedstrategy}
    Let $ \left( \dot X^1, ..., \dot X^N \right) $ be a Nash equilibrium
    for problem (\ref{problem}). Then each strategy $\dot X^j$ satisfies
\[
    \left| \dot X^j \right| \leq  N \frac{\lambda}{\epsilon} \max_i \norm
    {H^i_p}_{\infty}.
\]
\end{lem}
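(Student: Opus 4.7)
The plan is to combine the pointwise first-order conditions (\ref{opt_cond})--(\ref{opt_cond_sum}) with an a-priori estimate on the marginal-value ratio $\theta^j := v^j_p/v^j_{r^j}$. Once one knows $|\theta^j|\le\|H^j_p\|_\infty$, the claimed bound on $|\dot X^j|$ reduces to elementary manipulations exploiting the sign structure of $g$.

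The hard part is the ratio estimate $|\theta^j(t,p,r)|\le\|H^j_p\|_\infty$. At maturity the terminal condition $v^j(T,p,r)=u^j(-r^j+H^j(p))$ immediately gives $\theta^j(T,\cdot)=-H^j_p(\cdot)$, so the bound holds at $t=T$. To propagate it to $t<T$ I would either differentiate the system (\ref{PDE_system}) to derive a linear-in-derivative PDE for $\theta^j$ and apply a maximum principle, or use the martingale property of $v^j$ along the equilibrium path combined with an envelope-theorem computation to represent $\theta^j$ as a (ratio of) conditional expectations bounded in terms of $H^j_p(P_T)$. Either route must handle the fact that $\dot X^*$ and the other players' feedback strategies depend implicitly on $p$ through all the $\theta^i$; this coupling is where the real technical difficulty lies.

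Granted the ratio bound, set $K:=\max_i\|H^i_p\|_\infty$. Since $g(0)=0$ and $g'>\epsilon>0$, both $g(z)$ and $zg'(z)$ carry the sign of $z$, so absolute values add without cancellation:
\[
\bigl|Ng(\dot X^*)+\dot X^*g'(\dot X^*)\bigr|=N|g(\dot X^*)|+|\dot X^*|\,g'(\dot X^*).
\]
Combining with (\ref{opt_cond_sum}) yields $N|g(\dot X^*)|\le\lambda N K$, hence $|g(\dot X^*)|\le\lambda K$. Plugging this and $|\theta^j|\le K$ into (\ref{strategy_j}) and using $g'(\dot X^*)\ge\epsilon$ gives
\[
|\dot X^j|=\frac{|\lambda\theta^j+g(\dot X^*)|}{g'(\dot X^*)}\le\frac{2\lambda K}{\epsilon}\le\frac{N\lambda K}{\epsilon}
\]
for $N\ge 2$. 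In the single-player case $N=1$ one has $\dot X^*=\dot X^j$, and the same sign argument applied directly to (\ref{opt_cond_sum}) yields $\epsilon|\dot X^j|\le|\dot X^j|\,g'(\dot X^j)\le\lambda|\theta^j|\le\lambda K$, which is precisely $N\lambda K/\epsilon$.

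The only substantive obstacle is the ratio estimate in the second paragraph: it is essentially a comparison principle for a nonlinear coupled system, and it is what provides the lemma's ``a priori'' character, since it is needed later to close the existence argument for classical solutions of (\ref{PDE_system}) in the risk-neutral and CARA cases. The rest of the proof is purely arithmetic.
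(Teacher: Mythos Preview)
Your argument has a genuine circularity problem. The lemma is an \emph{a-priori} estimate whose whole purpose is to close the existence proof for the PDE system (\ref{PDE_system}) (see the remark following the lemma and Appendix~A). But both of your proposed routes to the ratio bound $|\theta^j|\le\|H^j_p\|_\infty$ presuppose exactly what the lemma is meant to enable: differentiating (\ref{PDE_system}) and applying a maximum principle requires a classical solution to exist in the first place, and a martingale/envelope representation of $\theta^j$ requires enough regularity of the value function to justify differentiating under the supremum. You acknowledge this is ``the only substantive obstacle'', but it is not a technical detail to be filled in later---it is the entire content of the lemma. Moreover, your approach tacitly assumes that any Nash equilibrium is given by the candidate formula (\ref{strategy_j}), which itself is derived from the first-order conditions of a smooth HJB solution; the lemma as stated applies to \emph{any} Nash equilibrium in $\fX$, whether or not it arises this way.

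The paper avoids all of this by arguing directly at the level of the stochastic control problem (\ref{problem}), without ever touching the HJB equations. Fixing the other players' strategies $\dot X^{-j}$, one shows that on the event $A=\{\sum_i \dot X^i\ge 0\}$ the best response $\dot X^j$ cannot exceed $K:=\lambda h/\epsilon$: if it did, the truncated strategy $\dot Y^j:=\dot X^j\wedge K$ would yield a strictly larger payoff, because the saving in liquidity cost (estimated via $g'\ge\epsilon$ and $g\ge 0$ on $A$) dominates the loss in option value (controlled by the Lipschitz bound $\|H^j_p\|_\infty$). A symmetric argument on $A^c$ gives the lower bound $-K$, and then the bound on the complementary region follows from $\dot X^j=\sum_i\dot X^i-\sum_{i\ne j}\dot X^i$ together with the bounds already obtained for the other players. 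This truncation-comparison argument uses only the definition of best response and Assumption~\ref{assumption_g}; no value-function regularity is needed, and that is precisely what makes the estimate genuinely a priori. Your arithmetic after the ratio bound is correct, but the foundation it rests on is not available at this stage of the argument.
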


    \begin{proof}
    Let $j \in J$, $h \set \max_i \norm {H^i_p}_{\infty} $ and
    $ A \set \left\{ \sum_{i=1}^N \dot X^i \geq 0 \right\} $ be the set where the aggregate trading speed is nonnegative.
    Let us fix the sum of the competitors' strategies $\dot X^{-j}$. 
    On the set $A$ the best response $\dot X^j$ is bounded above
    by $ K \set \frac{\lambda}{\varepsilon} h $. Otherwise the truncated strategy
    $ \dot Y^j \set \dot X^j \wedge \frac{\lambda}{\varepsilon} h $ would outperform $\dot X^j$. To see this, let us compare the
    payoffs associated with $\dot X^j$ and $\dot Y^j$. The payoff associated with $X$ minus the payoff associated with $Y$ can estimated from below as
    \begin{eqnarray*}
        &    &          - \int_0^T \dot Y^j g \left( \dot Y^j + \dot X^{-j} \right) dt + H^j (P_T (Y^j)) \\
        &    &          + \int_0^T \dot X^j g \left( \dot X^j + \dot X^{-j} \right) dt - H^j (P_T (X^j)) \\
        &\geq&  \int_0^T \dot Y^j \left( g \left( \dot X^j + \dot X^{-j} \right) - g \left( \dot Y^j + \dot X^{-j} \right) \right) \\
        &    &  + \left( \dot X^j - \dot Y^j \right) g \left( \dot X^j + \dot X^{-j} \right) dt -
        \lambda (X_T^j - Y_T^j) \norm {H_p}_{\infty}.
    \end{eqnarray*}
    Note that $ \dot X^j + \dot X^{-j} \geq 0$ on $A$ and thus
    $ g \left( \dot X^j + \dot X^{-j} \right) \geq 0 $ due to Assumption \ref{assumption_g}.
    Furthermore, $ g \left( \dot X^j + \dot X^{-j} \right) - g \left( \dot Y^j
    + \dot X^{-j} \right) \geq \epsilon \left( \dot X^j - \dot Y^j \right) $,
    again by Assumption \ref{assumption_g}. The difference in the
    payoffs is therefore strictly bigger than
    \begin{eqnarray*}
        &  &  \int_0^T \dot Y^j \epsilon \left( \dot X^j - \dot Y^j \right) dt - \lambda h \int_0^T \dot X^j - \dot Y^j dt \\
        &\geq& \int_{ \dot X^j > \dot Y^j} \left( \epsilon \dot Y^j - \lambda h \right)  \left( \dot X^j - \dot Y^j \right) dt \\
        & = & 0.
    \end{eqnarray*}
    This shows that $\dot X^j$ is bounded above on the set $A$. A symmetric argument shows that on the complement $ A^c $ the optimal response is bounded below by $- K$. Furthermore, this implies that on $A^c$ the optimal response is bounded above by
    $$ \dot X^j = \sum_{i=1}^N \dot X^i + \sum_{i \neq j} - \dot X^i \leq 0 + (N-1) K $$
A similar argument yields that $\dot X^j$ is bounded below on $A$. This completes the proof.
\end{proof}

    \begin{rem}
        These a priori estimates, together with the boundedness assumption on the payoffs $H^j$, imply that the optimization problem (\ref{problem}) is bounded. Moreover, if a smooth solution to the PDE system (\ref{PDE_system}) exists, it is bounded.
    \end{rem}

In the one player framework we can use a standard result from the theory of stochastic control to show that (\ref{PDE_system}) admits a unique solution.
\begin{prop} \label{gen_ut_1p}
    Let $N=1$ and for the terminal condition $\psi(p, r) \set u(-r + H(p))$ let  $\psi \in \mathcal C^3 $ and let $\psi, \psi_p, \psi_r$ satisfy a polynomial growth condition. Then the Cauchy problem (\ref{PDE_system}) admits a unique classical solution in $ \mathcal C^{1,2} $, which coincides with the value function $V$.
\end{prop}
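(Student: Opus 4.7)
The plan is to view the $N=1$ problem as a standard stochastic control problem with state $(P,R)$ governed by (\ref{state1})--(\ref{state2}), control $\dot X$, and terminal reward $\psi(P_T,R_T)$, and to show that (\ref{PDE_system}) is its dynamic programming equation. First I would use Lemma \ref{boundedstrategy} with $N=1$ to observe that any candidate optimal control satisfies $|\dot X|\le K\set\frac{\lambda}{\epsilon}\norm{H_p}_\infty$, so without changing the value function (\ref{valuefunction}) we may replace $\fX$ by the subclass of strategies with $|\dot X|\le K$ and take the supremum in (\ref{HJB}) over the compact set $[-K,K]$. On this compact set, the Hamiltonian
\[
\cH(q_p,q_r)\set\sup_{|c|\le K}\bigl[\lambda c\,q_p+c g(c)\,q_r\bigr]
\]
inherits smooth dependence on $(q_p,q_r)$: whenever the maximiser is interior, Assumption \ref{assumption_g} together with the first-order condition (\ref{opt_cond}) determines it uniquely, and the implicit function theorem yields smoothness in $(q_p,q_r)$.

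Next I would invoke a classical existence theorem for (degenerate) parabolic HJB equations with bounded control set and smooth polynomial-growth terminal data, e.g.\ the statements of Fleming \& Soner, \emph{Controlled Markov Processes and Viscosity Solutions}, Chapters IV--V. The hypotheses hold: $\sigma$ is constant, the drift $\lambda c$ and running cost $c g(c)$ are $\mathcal C^3$ and bounded on $[-K,K]$, and by assumption $\psi\in\mathcal C^3$ with $\psi,\psi_p,\psi_r$ of polynomial growth. This delivers a classical solution $v\in\mathcal C^{1,2}([0,T)\times\RR^2)\cap\mathcal C([0,T]\times\RR^2)$ of (\ref{PDE_system}) of polynomial growth.

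Identification $v=V$ and uniqueness in this class would then follow from a standard verification argument: Itô's formula applied to $v(t,P_t,R_t)$ combined with the HJB equation produces a supermartingale under any admissible $\dot X$, and a true martingale along the feedback maximiser (\ref{strategy_j}); the polynomial growth of $v$ together with the a priori bound of Lemma \ref{boundedstrategy} guarantees via standard moment estimates on $(P,R)$ that the local-martingale parts are genuine martingales. Taking expectations and the supremum over admissible controls then gives $v=V$, hence uniqueness of the classical solution.

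The main obstacle is the degeneracy of the operator in the $r$-direction: the absence of a second-order $r$-derivative precludes a direct application of uniformly parabolic Schauder theory. This is precisely why I appeal to the stochastic-control machinery, which is tailored to degenerate HJB equations and exploits the value-function representation of the candidate solution to secure the required regularity.
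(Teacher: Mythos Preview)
Your proposal is correct and follows essentially the same route as the paper: use Lemma~\ref{boundedstrategy} to restrict to a compact control set, then invoke the classical existence result for degenerate parabolic HJB equations in Fleming \& Soner (the paper cites Theorem~IV.4.3 specifically), and finish with the verification theorem (Theorem~IV.3.1 there) to identify the solution with $V$. Your discussion of the degeneracy in the $r$-direction and of the smoothness of the Hamiltonian is more detailed than the paper's terse argument, but the structure is the same.
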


\begin{proof}
    Due to Lemma (\ref{boundedstrategy}), the optimal control $\dot X$ can be chosen from a compact set. Thus, we can apply Theorem $IV.4.3$ in Fleming \& Soner (1993), which yields that (\ref{PDE_system}) admits a unique solution in $\mathcal C^{1,2}$, which is of polynomial growth. It remains to apply the  Verification Theorem $IV.3.1$ from Fleming \& Soner (1993) to see that this solution coincides with the agent's value function $V$.
\end{proof}

%
%

\section{Examples}

In this section we establish existence and uniqueness of equilibrium
results for risk neutral and CARA investors. For risk neutral
investors and linear cost functions the equilibrium strategies can
be given in closed form; if spread crossing costs are involved a
closed form solution is not available and we report numerical
results instead.

\subsection{Risk Neutral Agents} 
Let us assume that all players are risk neutral, i.e. $u^j (z) = z$. In a first step we prove existence of a unique solution to the system (\ref{PDE_system}) for general cost functions $g$. Subsequently we construct an explicit solution to (\ref{PDE_system}) for the linear cost structure used in Algmen \& Chriss (2001).

    \subsubsection{General Cost Structure}
Let $g$ be a general cost function which satisfies Assumption \ref{assumption_g}. In the risk neutral case the value function of player $j$ turns into
    $$ V^j(t, p, r) = - r^j + \sup_{X^j \in \fX } \EE_t \left[ - \int_t^T  \dot X^j_s g \left( \sum_{i=1}^N \dot X^i_s \right) ds + H^j( P_T ) | P_t = p \right] . $$
In particular, $V^j_{r^j} \equiv -1$ so the optimal strategies do not depend on the trading costs. In other words, the state variable $r$ is redundant and we omit it in this section. We write
    $$ V^j(t, p) \set V^j(t, p, 0) . $$
The HJB-equation (\ref{HJB}) turns into
    \begin{equation} \label{HJB_rn}
        0 = v^j_t + \frac{1}{2} \sigma^2 v^j_{pp} + \sup_{c^j} \left[ \lambda \left( c^j + \dot X^{-j} \right) v^j_p - c^j g \left( c^j + \dot X^{-j} \right)  \right]
    \end{equation}
where we have used $ v^j_{r^j} = -1 $. The optimal trading speed form (\ref{strategy_j}) is given by
    \begin{equation} \label{strategy_j_rn}
        c^j = \dot X^j = - \frac{1}{g' \left( \dot X^* \right) } \left[ - \lambda v^j_p + g ( \dot X^* ) \right]
    \end{equation}
where the aggregate trading speed $\dot X^* = \sum_{i=1}^N \dot X^j$ is the unique solution to
    \begin{equation} \label{cumspeed_rn}
        0 = \lambda \sum_{i=1}^N v_p^i - N g \left( \sum_{i = 1}^N \dot X^i_t \right) - \left( \sum_{i = 1}^N \dot X^i_t \right) g' \left( \sum_{i =1}^N \dot X^i_t \right) .
    \end{equation}
The system of PDEs (\ref{PDE_system}) therefore takes the form
        \begin{equation} \label{system_rn}
            0 = v^j_t + \frac{1}{2} \sigma^2 v^j_{pp} + \lambda \dot X^* v^j_p - \dot X^j g \left( \dot X^* \right)
        \end{equation}
with terminal condition $v^j(T, p) = H^j(p)$. The following proposition shows that a unique solution exists if $H \in \mathcal C^2_b$, i.e. $H$ and its derivatives up to order $2$ are bounded. The proof follows from a general existence result stated in Appendix A.

\begin{prop} \label{prop_existence_rn}
    Let $H \in \mathcal C^2_b$. Then the Cauchy problem (\ref{system_rn}) admits a unique classical solution in $ \mathcal C^{1,2} $, which coincides with the vector of value functions.
\end{prop}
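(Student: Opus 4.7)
The plan is to cast (\ref{system_rn}) as a semilinear parabolic system whose nonlinear term depends smoothly on the vector of first derivatives, and then combine the a-priori bound of Lemma \ref{boundedstrategy} with standard parabolic theory. First, I would use Assumption \ref{assumption_g} and the implicit function theorem: since $z \mapsto N g(z) + z g'(z)$ is $\mathcal C^1$ and strictly increasing, equation (\ref{cumspeed_rn}) defines $\dot X^*$ as a $\mathcal C^1$ function of $\sum_i v^i_p$, and formula (\ref{strategy_j_rn}) then expresses $\dot X^j$ as a $\mathcal C^1$ function of $(v^1_p,\ldots,v^N_p)$ (note $g'\geq\epsilon>0$ keeps the denominator away from zero). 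Hence the system takes the form
\[
v^j_t + \tfrac12 \sigma^2 v^j_{pp} + F^j(v^1_p,\ldots,v^N_p)=0,\qquad v^j(T,p)=H^j(p),
\]
with $F^j$ smooth in its arguments.

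Second, because $H^j\in\mathcal C^2_b$, the a-priori estimate of Lemma \ref{boundedstrategy} provides a universal bound $K \set N(\lambda/\epsilon)\max_i\norm{H^i_p}_\infty$ on any candidate equilibrium speed, which via (\ref{strategy_j_rn})--(\ref{cumspeed_rn}) translates into an a-priori bound on each $v^j_p$ that can appear in a solution coming from the control problem. This justifies truncating $F^j$ smoothly outside a large ball in $\RR^N$ to obtain $\tilde F^j$ that is globally bounded and globally Lipschitz. The truncated Cauchy problem is a semilinear parabolic system with diagonal principal part $\tfrac12\sigma^2\partial_{pp}$ and $\mathcal C^2_b$ terminal data, so a classical solution $v=(v^1,\ldots,v^N)\in\mathcal C^{1,2}$ exists by a standard Schauder fixed-point argument (as in the general result of Appendix A): iterate the heat semigroup against $\tilde F^j$ on a ball in $\mathcal C^{0,1}$ whose radius is controlled by $\norm{H^j_p}_\infty$ and $\mathrm{Lip}(\tilde F^j)$; parabolic smoothing provides the compactness and Lipschitz continuity gives the contraction on short time intervals that is then patched by the uniform bounds.

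Third, I would verify that the truncation is inactive. Differentiating the PDE in $p$, writing $w^j \set v^j_p$ and using the terminal condition $w^j(T,p)=H^j_p(p)$, one obtains a linear parabolic system for $(w^1,\ldots,w^N)$ with bounded coefficients (the $p$-derivatives of the smooth nonlinearity $F^j$). A maximum-principle / Feynman--Kac argument then yields $\norm{v^j_p}_\infty\leq \norm{H^j_p}_\infty$, so the truncation radius was never attained and $v$ solves the original system (\ref{system_rn}). For the identification with the value functions and uniqueness, I would apply the Verification Theorem IV.3.1 of Fleming \& Soner (1993) player by player: for any smooth classical solution of (\ref{system_rn}) with bounded gradients, fixing $X^{-j}$ given by (\ref{strategy_j_rn}) and substituting the candidate feedback $\dot X^j$ into player $j$'s control problem shows $v^j$ equals $V^j$, and this pins down the solution uniquely.

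The main obstacle is the coupling: the terms $\dot X^*$ and $\dot X^j$ entangle the $N$ equations through all gradients $v^i_p$, so neither a direct one-equation argument nor a simple comparison principle suffices. Everything hinges on (i) rewriting this coupling as a smooth (after truncation, Lipschitz) function of the gradient vector, which is possible only because of Assumption \ref{assumption_g} and $g'\geq\epsilon$, and (ii) having the universal speed bound from Lemma \ref{boundedstrategy} available \emph{before} the PDE is solved, so that the truncation can be calibrated \emph{a priori} rather than \emph{a posteriori}.
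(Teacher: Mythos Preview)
Your overall plan matches the paper's: rewrite (\ref{system_rn}) as a semilinear parabolic system $v_t=Lv+F(v_p)$ with $F$ smooth by the implicit function theorem (exactly your first paragraph), use the a-priori speed bound of Lemma \ref{boundedstrategy}, obtain a classical solution by a fixed-point argument for the mild formulation, and identify it with the value functions via Fleming--Soner's verification theorem. So the architecture is right.

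The gap is in your third step. After differentiating in $p$ you obtain a \emph{coupled} linear system for $w^j=v^j_p$,
\[
w^j_t+\tfrac12\sigma^2 w^j_{pp}+\sum_{i=1}^N \frac{\partial F^j}{\partial v^i_p}(v_p)\,w^i_p=0,
\]
and the off-diagonal coefficients $\partial F^j/\partial v^i_p$ ($i\neq j$) are generically nonzero because $\dot X^*$ and $\dot X^j$ depend on all components of $v_p$. For such systems there is no componentwise maximum principle, so the claimed bound $\norm{v^j_p}_\infty\le\norm{H^j_p}_\infty$ does not follow; the best a direct Feynman--Kac/Gronwall argument gives is a bound that grows with the Lipschitz constant of $\tilde F$, i.e.\ with your truncation radius, which is circular for the purpose of showing the truncation inactive.

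The paper avoids this by not truncating at all. It runs a Banach contraction on a short interval $[0,\tau]$ in $\mathcal C^1_b$, obtains a local $\mathcal C^{1,2}$ solution, immediately applies the verification theorem to identify it with the vector of value functions on that interval, and \emph{then} invokes Lemma \ref{boundedstrategy}: since the solution is now the value function, the equilibrium speeds are bounded by $K=N(\lambda/\epsilon)\max_i\norm{H^i_p}_\infty$, and via (\ref{strategy_j_rn})--(\ref{cumspeed_rn}) this yields a fixed bound on $\norm{v^j_p}_\infty$ independent of the step. Hence the new ``initial'' datum at time $\tau$ lies in the same ball of $\mathcal C^1_b$, and one iterates with a uniform step $\tau$ up to $T$. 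In short, the a-priori bound is fed back through verification at every step rather than through a direct PDE gradient estimate; that interleaving is precisely what closes the loop that your step 3 cannot.
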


An alternative way of solving the system (\ref{system_rn}) is the following: If we sum up the $N$ equations, we get a Cauchy problem for the aggregate value function $v \set \sum_{i=1}^N v^i$, namely
\[
    0 = v_t + \frac{1}{2} \sigma^2 v_{pp} + \dot X^* \left[ \lambda v_p - g \left( \dot X^* \right) \right]
\]
with terminal condition $v (T, p) = \sum_{i=1}^N H^i (p)$. Existence and uniqueness of a solution to this one-dimensional problem can be shown using Theorem IV.8.1 in Ladyzenskaja (1968). Once the solution is known, we can plug it back into (\ref{system_rn}) and get $N$ decoupled equations. This technique is applied in the following section where we construct an explicit solution for linear cost functions.

    \subsubsection{Linear Cost Structure} \label{riskneutral}
    For the particular choice $g(z) = \kappa z$ $(\kappa > 0)$ used in Algmren \& Chriss (2001) and Schoeneborn \& Schied (2007), the solution to (\ref{system_rn}) can be given explicitly.

    \begin{cor}
       Let $g(z) = \kappa z$. Then the solution of (\ref{system_rn}) can be given in closed form as the solution to a nonhomogeneous heat equation.
    \end{cor}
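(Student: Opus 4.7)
The plan is to use the Hopf-Cole transformation to linearize the coupled system. With $g(z) = \kappa z$ the fixed-point relation \eqref{cumspeed_rn} and the individual-response formula \eqref{strategy_j_rn} become the explicit linear expressions
\[
    \dot X^* = \frac{\lambda}{\kappa(N+1)} \sum_{i=1}^N v^i_p, \qquad \dot X^j = \frac{\lambda}{\kappa} v^j_p - \dot X^*.
\]
Substituting into the $j$-th equation of \eqref{system_rn}, the drift contribution $\lambda \dot X^* v^j_p$ cancels with the $v^j_p$-part of $\dot X^j g(\dot X^*)$, leaving the much simpler
\[
    v^j_t + \tfrac{1}{2}\sigma^2 v^j_{pp} + \kappa (\dot X^*)^2 = 0, \qquad v^j(T,p) = H^j(p).
\]
The source $\kappa (\dot X^*)^2 = \frac{\lambda^2}{\kappa(N+1)^2}(v_p)^2$ depends only on the aggregate derivative $v_p$, where $v := \sum_{i=1}^N v^i$, so each $v^j$ is the Duhamel/heat-kernel representation of a nonhomogeneous backward heat equation once $v$ is known.

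Next I would compute $v$ by summing \eqref{system_rn} over $j$. Using $\sum_j \dot X^j g(\dot X^*) = \dot X^* g(\dot X^*) = \kappa(\dot X^*)^2$ and the explicit form of $\dot X^*$, the aggregate problem becomes the scalar Hamilton-Jacobi / Burgers-type Cauchy problem
\[
    v_t + \tfrac{1}{2}\sigma^2 v_{pp} + \frac{N\lambda^2}{\kappa(N+1)^2}(v_p)^2 = 0, \qquad v(T,p) = \sum_{i=1}^N H^i(p).
\]
The central step is the Hopf-Cole substitution $w := \exp(\beta v)$ with $\beta := \frac{2 N \lambda^2}{\sigma^2 \kappa (N+1)^2}$; this choice is rigged to make the $(v_p)^2$ nonlinearity cancel exactly, so $w$ satisfies the linear backward heat equation $w_t + \tfrac{1}{2}\sigma^2 w_{pp} = 0$ with terminal datum $\exp(\beta \sum_i H^i)$. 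Convolution with the Gaussian kernel then yields $w$ in closed form, hence $v = \beta^{-1}\log w$, and then each $v^j$ by Duhamel's formula against the now-explicit source $\kappa(\dot X^*)^2$.

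The main obstacle I expect is not construction but verification: one must check that the candidate built this way is genuinely a classical solution of the coupled system \eqref{system_rn}, not just a formal one. This is essentially automatic once one invokes Proposition \ref{prop_existence_rn}, which already asserts existence and uniqueness of a $\mathcal{C}^{1,2}$-solution; it then suffices to observe that the Hopf-Cole construction produces $\mathcal{C}^{1,2}$ functions (by the smoothing property of the heat kernel together with $H^i \in \mathcal{C}^2_b$), and that $w$ stays strictly positive so that $\log w$ is well defined (immediate, as $w$ is the heat convolution of a strictly positive bounded function). Uniqueness from Proposition \ref{prop_existence_rn} identifies the constructed candidate with the value functions.
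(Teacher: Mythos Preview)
Your proposal is correct and follows essentially the same route as the paper: compute the explicit trading speeds, derive the aggregate Burgers-type equation for $v=\sum_i v^i$, solve it via the Hopf--Cole transformation (which the paper packages as Lemma~\ref{Burgers}), and then feed $v_p$ back into each individual equation to obtain a nonhomogeneous heat equation solved by Duhamel's formula. Your additional verification paragraph invoking Proposition~\ref{prop_existence_rn} is a welcome bit of rigor that the paper leaves implicit.
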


    \begin{proof}
    The optimal trading speed from (\ref{strategy_j_rn}) and the aggregate trading speed from (\ref{cumspeed_rn}) are
    \begin{eqnarray}
        \dot X^j &=& \frac{\lambda}{\kappa} \left( v_p^j - \frac{1}{N+1} \sum_{i=1}^N v_p^i \right) \label{speed_j} \\
        \dot X^* &=& \sum_{i=1}^N \dot X^i = \frac{\lambda}{\kappa (N+1)} \sum_{i=1}^N v^i_p = \frac{\lambda}{\kappa (N+1)} v_p . \label{speed}
    \end{eqnarray}
Let us sum up the $N$ equations from (\ref{system_rn}). This yields the following PDE for the aggregate value function $v = \sum_{i=1}^N v^i$:
    \begin{equation} \label{cum_solution_rn}
        0 = v_t + \frac{1}{2} \sigma^2 v_{pp} + \frac{ \lambda^2 N }{ \kappa (N+1)^2 } v_p^2
    \end{equation}
with terminal condition $ v(T,p) = \sum_{i=1}^N H^i(p) $. This PDE is a variant of \textit{Burgers' equation}, cf. Rosencrans (1972). It allows for an explicit solution, which we cite in Lemma \ref{Burgers}. With this solution at hand, we can solve for each single investor's value function. We plug the solution $v$ back into the equations (\ref{speed_j}) and (\ref{speed}) for the trading speeds, and those into the PDE (\ref{system_rn}). This yields
        $$ 0 = v^j_t + \frac{1}{2} \sigma^2 v^j_{pp} + \frac{\lambda^2}{\kappa (N+1)^2} v_p^2$$
with terminal condition $v^j(T, p) = H^j(p)$. This nonhomogeneous heat equation is solved by
    $$ v^j (T-t, p) = \int_{\RR} H^j d \mathcal N(p,\sigma t ) + \frac{ \lambda^2 }{ \kappa (N+1)^2 } \int_0^t \int_{\RR} v_p^2 (s, \cdot) d \mathcal N (p, \sigma (t-s) ) $$
where $v$ is given in Lemma \ref{Burgers} and $\mathcal N$ denotes the heat kernel. 
    \end{proof}

In the preceding proof and in Corollary \ref{cor_cara} we need the solution to a variant of \textit{Burgers' equation}. We cite it in the following Lemma.
    \begin{lem} \label{Burgers}
        Let $A > 0, B \neq 0$. The PDE
            $$ 0 = 2 v_t + A v_{pp} + B v_p^2 $$
        with terminal value
            $$ v(T, p) = G(p) $$
        is solved by
            $$ v(t,p) = \frac{ A }{B} \log \left[ \int_{\RR} \exp \left( \frac{B}{ A } G \left( \sqrt{A} z \right) \right) d \mathcal N \left( \frac{p}{\sqrt{A}}, T-t \right) \right] . $$
    \end{lem}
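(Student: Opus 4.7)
The plan is to linearize the PDE via a Cole--Hopf transformation, solve the resulting backward heat equation by the standard heat-kernel representation, and then undo the transformation and a change of variables to match the stated formula.

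First, I would introduce $w(t,p) \set \exp\!\bigl(\tfrac{B}{A} v(t,p)\bigr)$ and compute $w_t = \tfrac{B}{A} w\, v_t$, $w_p = \tfrac{B}{A} w\, v_p$, $w_{pp} = \tfrac{B}{A} w\, v_{pp} + \tfrac{B^2}{A^2} w\, v_p^2$. Substituting the PDE $2v_t = -A v_{pp} - B v_p^2$ then gives
\[
    w_t + \tfrac{A}{2} w_{pp} = \tfrac{B}{2A}\, w\, v_p^2 \bigl(B - B\bigr) = 0,
\]
so $w$ satisfies the backward heat equation $w_t + \tfrac{A}{2} w_{pp} = 0$ with terminal condition $w(T,p) = \exp\!\bigl(\tfrac{B}{A} G(p)\bigr)$. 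Because $A > 0$ and $G$ is (we may assume) sufficiently regular -- in the application, $G$ will be bounded or of controlled growth -- this is the heat equation whose solution is given by convolution against the Gaussian kernel with variance $A(T-t)$.

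Next I would write this solution in integral form,
\[
    w(t,p) = \int_{\RR} \exp\!\Bigl(\tfrac{B}{A}\, G(q)\Bigr)\, \tfrac{1}{\sqrt{2\pi A (T-t)}}\, \exp\!\Bigl(-\tfrac{(q-p)^2}{2A(T-t)}\Bigr)\, dq,
\]
and then perform the change of variables $q = \sqrt{A}\, z$, $dq = \sqrt{A}\, dz$. The Gaussian factor becomes $\tfrac{1}{\sqrt{2\pi(T-t)}} \exp\!\bigl(-\tfrac{(z - p/\sqrt{A})^2}{2(T-t)}\bigr)$, i.e.\ the density of $\mathcal N(p/\sqrt{A},\, T-t)$ evaluated at $z$. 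This yields
\[
    w(t,p) = \int_{\RR} \exp\!\Bigl(\tfrac{B}{A}\, G(\sqrt{A}\,z)\Bigr)\, d\mathcal N\!\left(\tfrac{p}{\sqrt{A}},\, T-t\right)(z).
\]

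Finally, inverting the Cole--Hopf transformation via $v = \tfrac{A}{B}\log w$ gives exactly the claimed formula. Uniqueness of $v$ (in the class where one has classical differentiability and a polynomial-type growth bound, which is all one needs for the applications in the corollary and in Corollary \ref{cor_cara}) follows from uniqueness for the linear heat equation once one verifies that $w$ stays strictly positive and has controlled growth, which is automatic from the explicit formula provided $G$ is bounded above (resp.\ bounded when $B<0$). The only mildly delicate point is to check, before taking logarithms, that the integrand is well-defined; this requires an integrability condition on $\exp(\tfrac{B}{A}G)$ against a Gaussian, and is the one hypothesis on $G$ that really needs to be invoked -- in the uses of this lemma it will be satisfied because $G$ is bounded.
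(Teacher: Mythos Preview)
Your argument is correct: the Cole--Hopf substitution $w=\exp(\tfrac{B}{A}v)$ linearizes the equation to the backward heat equation $w_t+\tfrac{A}{2}w_{pp}=0$, the heat-kernel representation and the rescaling $q=\sqrt{A}\,z$ yield precisely the stated formula, and inverting the logarithm recovers $v$. (Your intermediate display ``$\tfrac{B}{2A}wv_p^2(B-B)=0$'' is a bit garbled, but the computation $w_t+\tfrac{A}{2}w_{pp}=\tfrac{B}{2A}w\bigl(2v_t+Av_{pp}+Bv_p^2\bigr)=0$ is the right one.)

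The paper's own proof takes a slightly different, more indirect route: it first performs a linear change of variables to normalize to the case $A=B=1$ and then simply cites Rosencrans (1972) for that special case. Rosencrans' derivation is itself a Hopf--Cole argument (phrased via stochastic integrals), so at bottom the mechanisms coincide; the difference is packaging. Your version has the advantage of being self-contained and of handling general $A>0$, $B\neq 0$ in one stroke, at the cost of writing out the heat-kernel formula and the change of variables explicitly. The paper's version is shorter but relies on an external reference and an unspecified ``linear transformation''. Either is perfectly adequate here.
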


    \begin{proof}
        By means of a linear transformation we can reduce the problem to $A=B=1$. This particular case is solved in Rosencrans (1972).
    \end{proof}

    \subsubsection{Numerical Illustrations}
In the risk neutral setting, we were able to reduce the system of
PDEs from the multi-player setting to the one-dimensional PDE
(\ref{cum_solution_rn}) for the aggregate value function. This can
be interpreted as the value function of the representative agent.
Such reduction to a representative agent is not always possible for
more general utility functions. In the sequel we illustrate the
optimal trading speed $\dot X(t, p)$ and surplus\footnote{By surplus
we difference between the representative agent's optimal expected
utility $v(t, p)$ and the conditional expected payoff $\EE_t[H( P_T
) | P_t = p]$ in the absence of any market impact.} of a
representative agent as  functions of time and spot prices for a
European call options $H(P_T) = (P_T - K)^+$ and digital option
$H(P_T) = \mathds 1_{ \{ P_T \geq K \} }$, respectively. We choose a
linear cost function, strike $K= 100$, maturity $T = 1$, volatility
$\sigma = 1$ and liquidity parameters $\lambda = \kappa = 0.01$. We
see from Figure \ref{bild_1p_rn_call} that for the case of a call
option both the optimal trading speed and the surplus increases with
the spot; the latter also increases with the time to maturity.
Furthermore, the increase in the trading speed if maximal, when the
option is at the money. For digital options the trading speed is
highest for at the money options close to maturity as the trader
tries to push the sport above the strike. If the spot is far away
from the strike, the trading speed is very small as it is unlikely
that the trader can push the sport above the strike 
before expiry.

    \begin{figure}
        \includegraphics[width = 0.45 \textwidth]{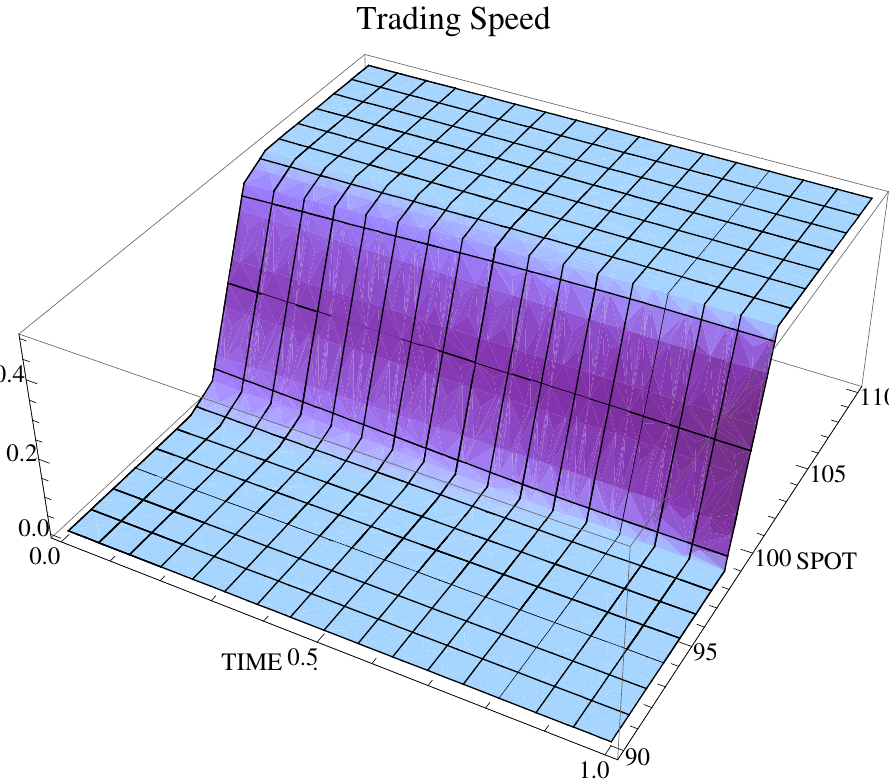} \hfill
        \includegraphics[width = 0.45 \textwidth]{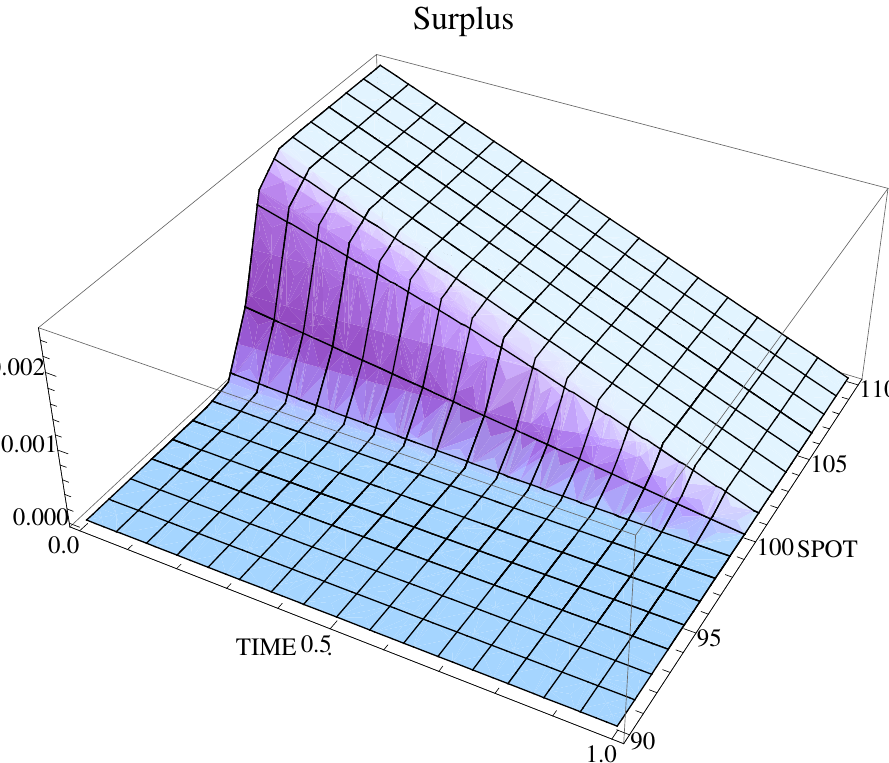} \\
        \caption[]{\small Trading speed and surplus for a risk neutral investor holding a European Call option.}
        \label{bild_1p_rn_call}
    \end{figure}

    \begin{figure}
        \includegraphics[width = 0.45 \textwidth]{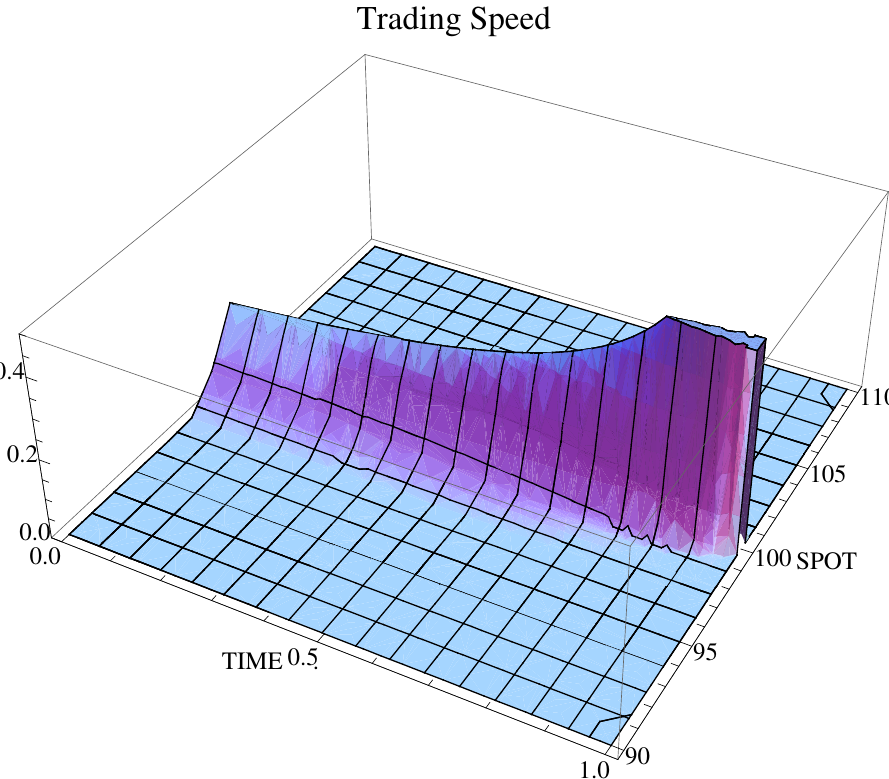} \hfill
        \includegraphics[width = 0.45 \textwidth]{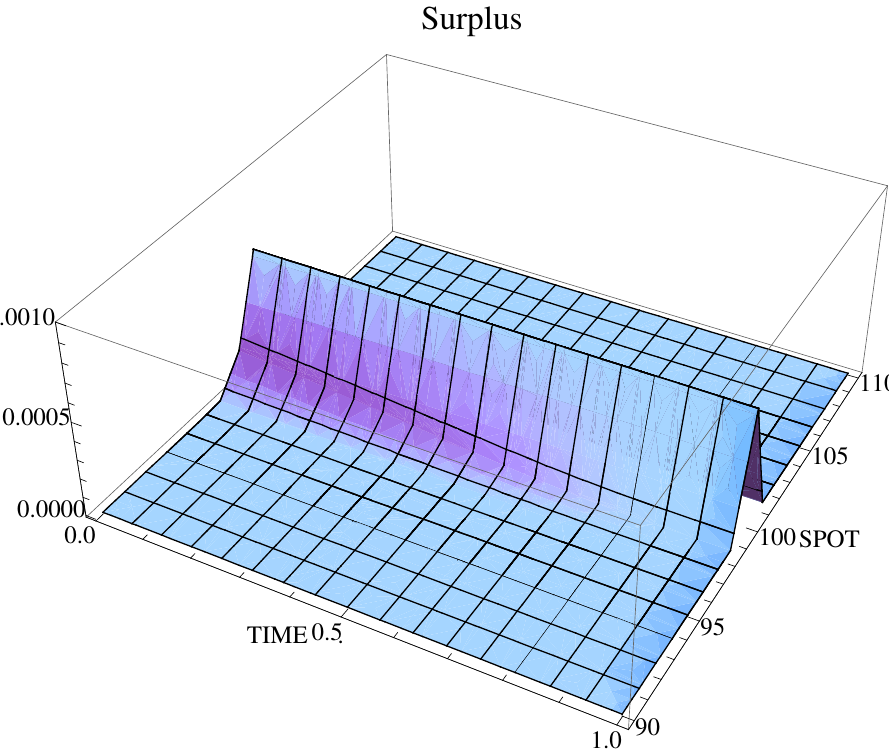} \\
        \caption[]{\small Trading speed and surplus for a risk neutral investor holding a Digital option.} 
        \label{bild_1p_rn_digital}
    \end{figure}

Figures \ref{bild_1p_rn_call_spread} and \ref{bild_1p_rn_digital_spread} illustrate that a high spread makes manipulation unattractive. It shows the optimal trading speed and the surplus at time $t=0$ for the Call and Digital option in the one player framework. We used the cost function
\[
    g(z) = \kappa z + s \cdot \sign (z) \quad \mbox{ for different spreads } s \in \{ 0, 0.001, 0.002, 0.003, 0.004 \}
\]
with the remaining parameters as above. We see that the higher the spread, the smaller the trading speed and the surplus. This is intuitive as frequent trading, in particular, when the option is at the money, incurs high spread crossing costs. The same is true for fixed transaction costs which also discourage frequent trading.

    \begin{figure}
        \includegraphics[width = 0.45 \textwidth]{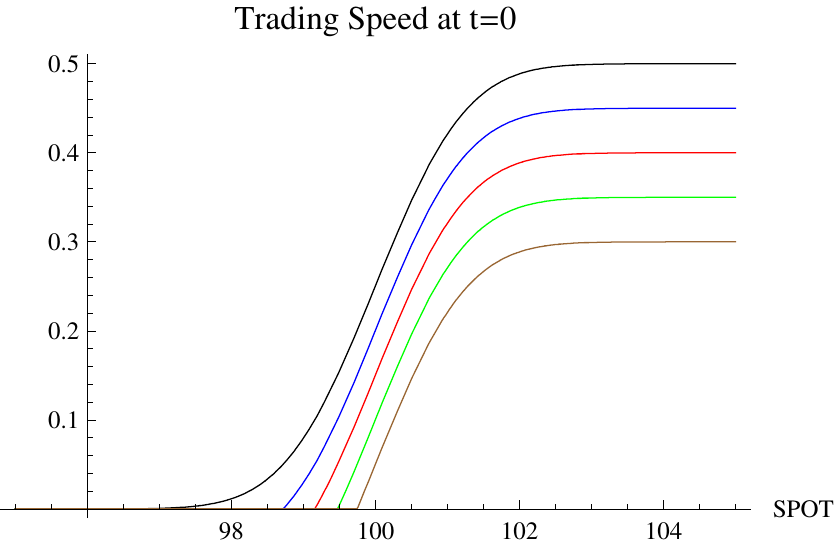} \hfill
        \includegraphics[width = 0.45 \textwidth]{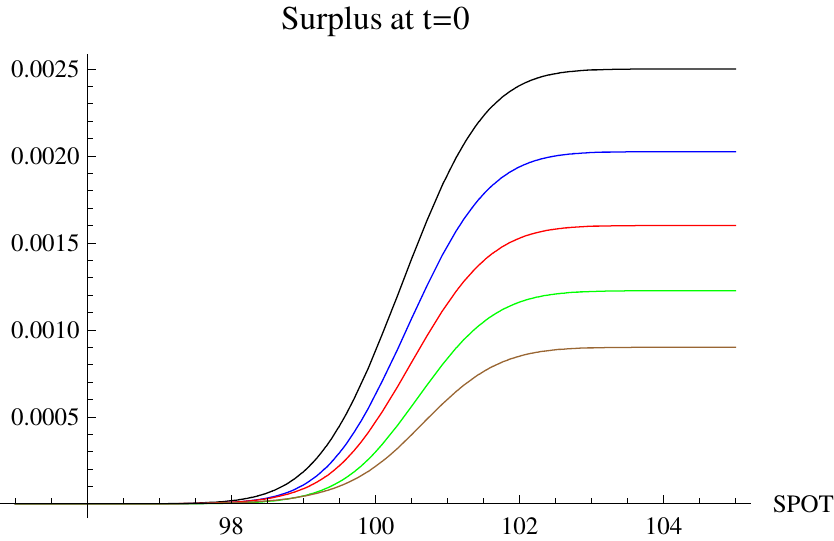} \\
        \caption[]{\small Trading speed and surplus for a risk neutral investor holding a European Call option for different spread sizes $s = $ 0 (black), 0.001 (blue), 0.002 (red), 0.003 (green), 0.004 (brown). The higher the spread, the smaller the trading speed and the surplus. }
        \label{bild_1p_rn_call_spread}
    \end{figure}
    
    \begin{figure}
        \includegraphics[width = 0.45 \textwidth]{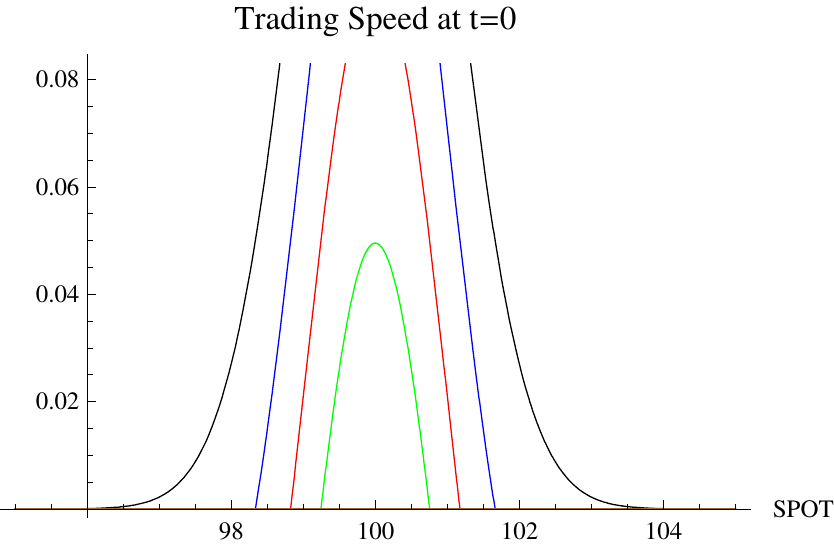} \hfill
        \includegraphics[width = 0.45 \textwidth]{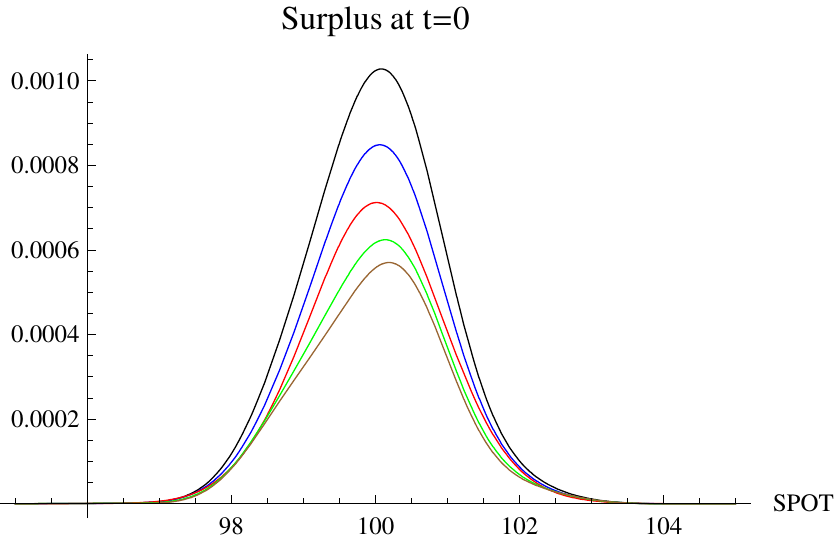} \\
        \caption[]{\small Trading speed and surplus for a risk neutral investor holding a Digital option for different spread sizes $s = $ 0 (black), 0.001 (blue), 0.002 (red), 0.003 (green), 0.004 (brown). The higher the spread, the smaller the trading speed and the surplus. }
        \label{bild_1p_rn_digital_spread}
    \end{figure}

    \subsection{Risk Averse Agents}
A second class which yields explicit results is those of exponential utility functions $ u^j(z) = - \exp \left( - \alpha^j z \right) $ for $ j = 1, ..., N$, where $\alpha^j > 0$ is the risk aversion coefficient. In this case the value functions satisfy
\[
    V^j(t, p, r) = \exp \left( \alpha^j r^j \right) \cdot V^j (t, p, 0)
\]
and thus $ V^j_{r^j} = \alpha^j V^j $. We suppress the state variable $r$ and write $V^j (t, p) \set V^j(t, p, 0)$. As above, we first show existence and uniqueness of a solution for a general cost structure. In a second step, we derive the closed form solution for the linear cost function in the single player framework.

    \subsubsection{General Cost Structure}
The HJB-equation (\ref{HJB}) turns into
    \begin{equation} \label{HJB_exp}
        0 = v^j_t + \frac{1}{2} \sigma^2 v^j_{pp} + \sup_{c^j} \left[ \lambda \left( c^j + \dot X^{-j} \right) v^j_p + c^j g \left( c^j + \dot X^{-j} \right) \alpha^j v^j \right]
    \end{equation}
with terminal condition $ v^j(T, p) = - \exp \left( - \alpha^j H^j(p) \right) $. We apply the linear transformation $ \tilde v^j \set - \frac{1}{\alpha^j} \log (- v^j)$ to turn the HJB equation into
    \begin{equation} \label{HJB_exp_transformed}
        0 = \tilde v^j_t + \frac{1}{2} \sigma^2 \tilde v^j_{pp} - \frac{1}{2} \sigma^2 \alpha^j \left( \tilde v^j_p \right)^2 + \sup_{c^j} \left[ \lambda \left( c^j + \dot X^{-j} \right) \tilde v^j_p - c^j g \left( c^j + \dot X^{-j} \right) \right]
    \end{equation}
with terminal condition $\tilde v^j (T, p) = H^j(p)$. Note that this equation equals the HJB-equation (\ref{HJB_rn}) in the risk neutral setting, up to the quadratic term $ - \frac{1}{2} \sigma^2 \alpha^j \left( \tilde v^j_p \right)^2 $. As in (\ref{strategy_j_rn}), the optimal trading speeds are
    \begin{equation} \label{strategy_j_cara}
        c^j = \dot X^j = - \frac{1}{g' \left( \dot X^* \right) } \left[ - \lambda \tilde v^j_p + g ( \dot X^* ) \right]
    \end{equation}
where the aggregate trading speed $\dot X^*$ is the unique solution to
    \begin{equation} \label{cumspeed_cara}
        0 = \lambda \sum_{i=1}^N \tilde v_p^i - N g \left( \sum_{i = 1}^N \dot X^i_t \right) - \left( \sum_{i = 1}^N \dot X^i_t \right) g' \left( \sum_{i = 1}^N \dot X^i_t \right) .
    \end{equation}
If we plug $\dot X^*$ and $\dot X^j$ back into (\ref{HJB_exp_transformed}), we get
    \begin{equation} \label{system_cara}
        0 = \tilde v^j_t + \frac{1}{2} \sigma^2 \tilde v^j_{pp} - \frac{1}{2} \sigma^2 \alpha^j \left( \tilde v^j_p \right)^2 + \lambda \dot X^* v^j_p - \dot X^j g \left( \dot X^* \right).
    \end{equation}
We can show existence and uniqueness of a solution.

    \begin{prop} \label{prop_existence_cara}
        Let $H^j \in \mathcal C^2_b$ for each $j \in J$. The Cauchy problem \ref{HJB_exp_transformed} admits a unique solution, which coincides with the vector of value functions (up to an exponential transformation).
    \end{prop}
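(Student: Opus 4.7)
The plan is to mirror the strategy of Proposition \ref{prop_existence_rn}: apply the exponential transformation $\tilde v^j \set -\frac{1}{\alpha^j}\log(-v^j)$, invoke the general existence result of Appendix A on the transformed system (\ref{system_cara}), and then use a verification theorem to identify the classical solution with the vector of value functions. Sign information $v^j < 0$ comes from $u^j < 0$, so the transformation is well-defined and sets up a one-to-one correspondence between negative classical solutions of (\ref{HJB_exp}) with terminal data $-\exp(-\alpha^j H^j)$ and classical solutions of (\ref{HJB_exp_transformed}) with terminal data $H^j \in \cC^2_b$.

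Next I would rewrite the nonlinearity in (\ref{system_cara}) as a smooth function of the gradient vector $(\tilde v^i_p)_{i \in J}$. By Assumption \ref{assumption_g}, the strictly increasing map $z \mapsto Ng(z)+zg'(z)$ in (\ref{cumspeed_cara}) makes $\dot X^*$ a smooth implicit function of $\sum_i \tilde v^i_p$, and (\ref{strategy_j_cara}) then expresses each $\dot X^j$ as a smooth function of the full gradient vector. Compared with the risk-neutral system (\ref{system_rn}), the only new feature is the additive term $-\frac{1}{2}\sigma^2\alpha^j(\tilde v^j_p)^2$, which is smooth but only locally Lipschitz in the gradient.

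The a-priori estimate of Lemma \ref{boundedstrategy} applies under the hypothesis $H^j \in \cC^2_b$: any equilibrium trading speed satisfies $|\dot X^j| \leq N(\lambda/\epsilon) \max_i \norm{H^i_p}_\infty$. This lets me replace the sup over $c^j \in \RR$ in (\ref{HJB_exp_transformed}) by a sup over a fixed compact set, which uniformly bounds the nonlinear coefficients of (\ref{system_cara}). With this truncation in place the system has the same structure as the risk-neutral system (\ref{system_rn}) plus a globally smooth, bounded-gradient quadratic term, so the Appendix-A existence theorem applies and yields a unique classical solution $(\tilde v^j)_{j \in J} \in \cC^{1,2}$ with bounded derivatives.

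Finally, I apply the verification argument (Theorem IV.3.1 of Fleming \& Soner (1993)) to the solution: the feedback controls produced by (\ref{strategy_j_cara}) are bounded, adapted and absolutely continuous, hence admissible in $\fX$; they attain the supremum in (\ref{HJB_exp_transformed}) and therefore form a Nash equilibrium. Undoing the transformation gives $V^j(t,p,0) = -\exp(-\alpha^j \tilde v^j(t,p))$, and the dependence on $r^j$ is recovered from the scaling $V^j(t,p,r) = \exp(\alpha^j r^j)V^j(t,p,0)$ noted at the start of the subsection. The hard part is controlling the quadratic gradient term $-\frac{1}{2}\sigma^2\alpha^j(\tilde v^j_p)^2$, which is genuinely nonlinear and only locally Lipschitz: the role of Lemma \ref{boundedstrategy} combined with $H^j \in \cC^2_b$ is precisely to supply the uniform gradient bound that turns this term into a bounded smooth nonlinearity and lets the Appendix-A argument go through essentially as in the risk-neutral case.
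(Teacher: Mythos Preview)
Your proposal is correct and follows essentially the same route as the paper: Appendix~A treats the risk-neutral and CARA cases simultaneously by writing the time-reversed transformed system as $v_t = Lv + F(v_p)$, noting that $F$ is smooth (implicit function theorem for $\dot X^*$) but only locally Lipschitz, and then using the a-priori bound of Lemma~\ref{boundedstrategy} to confine the solution to a ball $\XX_K \subset \cC^1_b$ so that a recursive application of Taylor's short-time fixed-point argument (Proposition~15.1.1) extends to all of $[0,T]$, followed by the Fleming--Soner verification theorem. Your identification of the quadratic gradient term as the only new obstacle and of Lemma~\ref{boundedstrategy} as the source of the needed uniform gradient bound matches the paper exactly.
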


    \begin{proof}
        See appendix A.
    \end{proof}

    \subsubsection{Linear Cost Structure, Single Player}
For the one player case with linear cost structure, we have an explicit solution:
    \begin{cor} \label{cor_cara}
        Let $N=1$ and $g(z) = \kappa z$. Then the Cauchy problem \ref{HJB_exp_transformed} admits a unique solution, which can be given in closed form.
    \end{cor}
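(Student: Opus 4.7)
The plan is to specialize the transformed HJB equation (\ref{HJB_exp_transformed}) to $N=1$ and $g(z) = \kappa z$, compute the inner supremum explicitly, and recognize the resulting scalar PDE as a member of the Burgers family covered by Lemma \ref{Burgers}. Uniqueness has already been established in Proposition \ref{prop_existence_cara}, so the corollary is really a construction statement and the work is to produce the formula.

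First I would observe that with $g(z) = \kappa z$ and $N=1$ the inner maximization in (\ref{HJB_exp_transformed}) reduces to $\sup_c\bigl[\lambda c \tilde v_p - \kappa c^2\bigr]$, which is attained at $c^* = \lambda \tilde v_p/(2\kappa)$ (this is precisely (\ref{strategy_j_cara})--(\ref{cumspeed_cara}) in the single-player linear case). Substituting $c^*$ back yields the value $\lambda^2 (\tilde v_p)^2 / (4\kappa)$. Hence (\ref{HJB_exp_transformed}) collapses to the scalar Cauchy problem
\begin{equation*}
    0 = \tilde v_t + \tfrac{1}{2}\sigma^2 \tilde v_{pp} + \left(\tfrac{\lambda^2}{4\kappa} - \tfrac{1}{2}\sigma^2 \alpha\right)(\tilde v_p)^2, \qquad \tilde v(T,p) = H(p).
\end{equation*}
Multiplying by $2$ puts the equation into the standard form treated in Lemma \ref{Burgers} with $A = \sigma^2$ and $B = \lambda^2/(2\kappa) - \sigma^2 \alpha$.

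Next I would distinguish two cases. In the generic case $B \ne 0$, Lemma \ref{Burgers} applies directly and delivers the closed-form representation
\begin{equation*}
    \tilde v(t,p) = \frac{\sigma^2}{B}\log\!\left[\int_\RR \exp\!\left(\tfrac{B}{\sigma^2}H(\sigma z)\right) d\mathcal N\!\left(\tfrac{p}{\sigma}, T-t\right)\right],
\end{equation*}
from which the original value function is recovered via $V(t,p,r) = -\exp(\alpha r)\exp(-\alpha \tilde v(t,p))$. In the degenerate case $B = 0$ (i.e.\ $\lambda^2 = 2\kappa\sigma^2\alpha$), the nonlinear term vanishes and $\tilde v$ solves the heat equation, whose solution is the Gaussian convolution $\tilde v(t,p) = \int_\RR H\, d\mathcal N(p,\sigma^2(T-t))$. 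In both cases $\tilde v \in \mathcal C^{1,2}$ because $H \in \mathcal C^2_b$, so the candidate is a bona fide classical solution.

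For uniqueness and the identification with the value function I would simply invoke Proposition \ref{prop_existence_cara}: the single-player linear setting is a special case of its hypotheses, so the constructed $\tilde v$ must coincide with the unique $\mathcal C^{1,2}$ solution and, after exponential transformation, with the value function. The only mildly delicate point is the case analysis around the sign of $B$: when the risk-aversion term exactly cancels the market-impact term one must fall back on the heat kernel rather than invoke the Hopf--Cole transformation inside Lemma \ref{Burgers}. I do not expect any other genuine obstacle, since boundedness of $H$ together with its first two derivatives makes all the integrals in the Hopf--Cole formula finite and differentiable under the integral sign.
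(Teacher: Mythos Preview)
Your proposal is correct and follows essentially the same route as the paper: compute the maximizer $c^* = \lambda \tilde v_p/(2\kappa)$, substitute to obtain the Burgers-type equation $0 = \tilde v_t + \tfrac{1}{2}\sigma^2 \tilde v_{pp} + \bigl(\tfrac{\lambda^2}{4\kappa} - \tfrac{1}{2}\sigma^2\alpha\bigr)\tilde v_p^2$, and invoke Lemma~\ref{Burgers}. You are in fact slightly more careful than the paper, which applies Lemma~\ref{Burgers} without commenting on the degenerate case $B = \lambda^2/(2\kappa) - \sigma^2\alpha = 0$; your observation that this case reduces to the plain heat equation is a genuine (if minor) refinement.
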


    \begin{proof}
The maximizer in (\ref{HJB_exp_transformed}) is
    $$ c = \dot X = \frac{\lambda}{ 2 \kappa } \tilde v_p $$
and the Cauchy problem (\ref{system_cara}) turns into
    $$ 0 = \tilde v_t + \frac{1}{2} \sigma^2 \tilde v_{pp} + \left( \frac{\lambda^2 }{4 \kappa } - \frac{1}{2} \sigma^2 \alpha \right) \tilde v_p^2  $$
with terminal condition $v(T,p) = H(p) $. This is Burgers' equation. Its explicit solution is given in Lemma \ref{Burgers}.
    \end{proof}

    \subsubsection{Numerical Illustrations} \label{illustration_cara}
Let us conclude this section with numerical illustrations. We simulated the system (\ref{HJB_exp}) for two players. Figure \ref{bild_2p_cara_call} shows the aggregate optimal trading speed $\dot X(0, p) + \dot Y(0, p)$ and the surpluses $v^j(0, p) - \EE \left[ u^j \left( H( P_T ) \right) | P_0 = p \right] $ for time $t=0$ and different spot prices $p \in [95,105]$ for the European Call option $H(P_T) = (P_T - K)^+$; we assume that Player 1 (blue) is the option writer and Player 2 (red) the option issuer. We chose the strike $K = 100$, maturity $T = 1$, volatility $\sigma = 2$ and liquidity parameters $\lambda = \kappa = 0.01$ and risk aversion parameters $\alpha^1=0.01, \alpha^2 = 0.01$ (plain), respectively, $\alpha^1 = 0.001, \alpha^2 = 0.1$ (dashed). Since Player 1 has a long position in the option, she has an incentive to buy the underlying; for the same reason Players 2 has an incentive to sell it (Panel (b)). Our simulations suggest that the dependence of the equilibrium trading speed on the agents' risk aversion is weak (Panels (b) and (c)) and that overall the option issuer is slightly more active than the option writer. Furthermore, we see from Panel (d) that the issuer benefits more from reducing her loss than the writer benefits from increasing her gains. This effect is due to the concavity of the utility function and increases with the risk aversion.

        \begin{figure}
                \subfloat[]{ \includegraphics[width = 0.45 \textwidth]{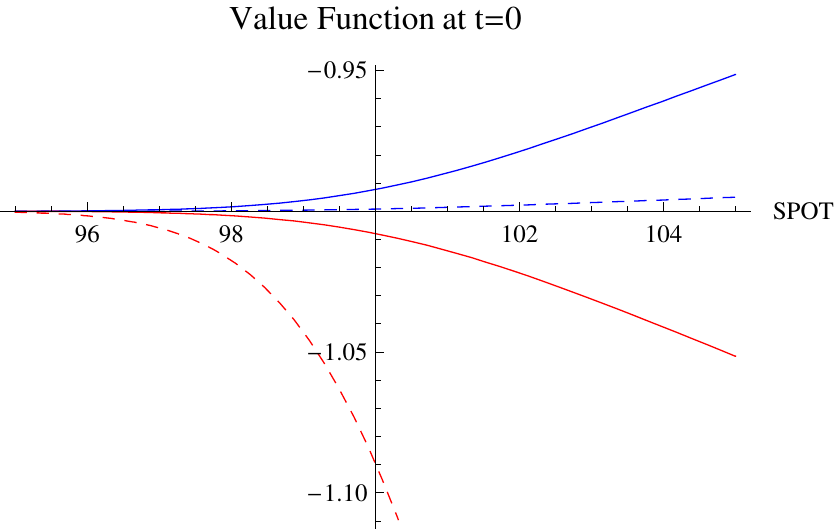} } \hfill
                \subfloat[]{ \includegraphics[width = 0.45 \textwidth]{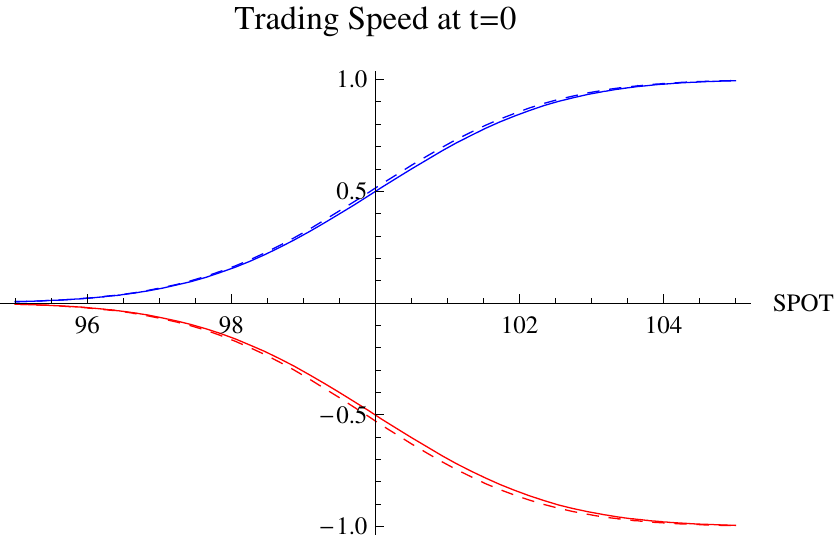} } \\
                \subfloat[]{ \includegraphics[width = 0.45 \textwidth]{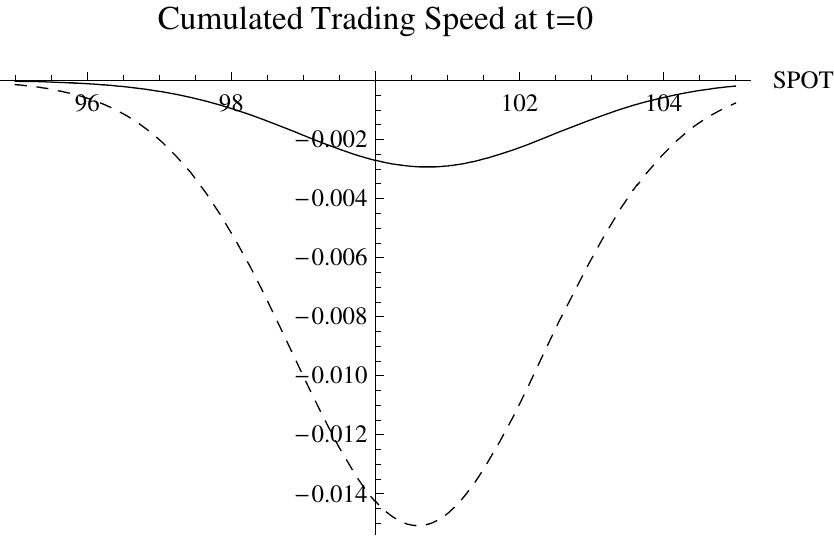} } \hfill
                \subfloat[]{ \includegraphics[width = 0.45 \textwidth]{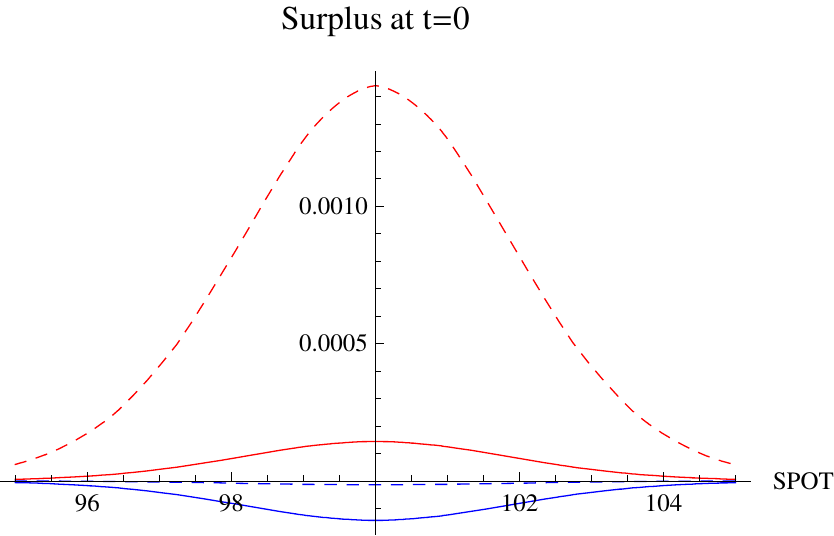} } \\
                \caption[]{\small Value function, trading speed, aggregate trading speed and surplus for the writer (blue) and issuer (red) of a European Call option when both agents are risk averse. The plain (dashed) curves display the case where issuer is about as (more) risk averse than the option writer.}
               \label{bild_2p_cara_call}
        \end{figure}

    \section{How to Avoid Manipulation}
In this section, we use the closed form solutions for risk neutral agents derived in subsection \ref{riskneutral} to illustrate how an option issuer may prevent other market participants from trading against her by using their impact on the dynamics of the underlying. Some of our observations were already made in Gallmeyer \& Seppi (2000) in a three-period binomial model. We start with the simplest case of a zero-sum game.

\begin{cor}
    Let all players be risk neutral with offsetting payoffs $ \sum_{i=1}^N H^i = 0 $. Then the aggregate trading speed is $ \sum_{i=1}^N \dot X^i \equiv 0 $.
\end{cor}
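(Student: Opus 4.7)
The plan is to show that the aggregate value function vanishes identically and then to read off the aggregate trading speed from the first order condition (\ref{cumspeed_rn}).

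Summing the $N$ equations in (\ref{system_rn}) and setting $v \set \sum_{i=1}^N v^i$, we obtain
\begin{equation*}
    0 = v_t + \frac{1}{2}\sigma^2 v_{pp} + \lambda \dot X^* v_p - \dot X^* g(\dot X^*),
\end{equation*}
with terminal condition $v(T,p) = \sum_{i=1}^N H^i(p) = 0$. I claim that $v \equiv 0$ is the unique classical solution. Indeed, if $v \equiv 0$ then $v_p \equiv 0$, so the aggregate trading speed must satisfy
\begin{equation*}
    0 = -N g(\dot X^*) - \dot X^* g'(\dot X^*),
\end{equation*}
by (\ref{cumspeed_rn}). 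Under Assumption \ref{assumption_g} the map $z \mapsto N g(z) + z g'(z)$ is strictly increasing (it is the sum of the strictly increasing map $z \mapsto g(z) + z g'(z)$ and the strictly increasing map $z \mapsto (N-1) g(z)$) and vanishes at $z=0$ since $g(0)=0$. Hence the unique solution is $\dot X^* = 0$, and then the aggregate PDE reduces to the heat equation $0 = v_t + \tfrac{1}{2}\sigma^2 v_{pp}$ with zero terminal data, which $v \equiv 0$ trivially solves.

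It remains to invoke uniqueness in the class of classical solutions. Proposition \ref{prop_existence_rn} guarantees that the full coupled system (\ref{system_rn}) has a unique classical solution $(v^1,\dots,v^N)$ in $\mathcal C^{1,2}$; summing, the aggregate $v$ is uniquely determined and therefore must coincide with the identically zero solution constructed above. Consequently $v_p \equiv 0$, and the argument of the previous paragraph applied pointwise in $(t,p)$ yields $\dot X^* \equiv 0$, which is the claim.

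The only delicate step is the uniqueness argument: strictly speaking one needs uniqueness for the aggregate scalar PDE (with the implicit dependence of $\dot X^*$ on $v_p$), but this follows from uniqueness for the coupled system already established in Proposition \ref{prop_existence_rn}, since summing any two hypothetical solutions of the system produces the same aggregate $v$.
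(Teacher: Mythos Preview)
Your argument follows the same route as the paper --- sum to obtain a scalar PDE for $v=\sum_i v^i$, show $v\equiv 0$, and read off $\dot X^*\equiv 0$ --- but there are two points worth flagging.

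First, the paper proves this corollary in the \emph{linear} cost setting of Section~\ref{riskneutral} (Section~5 opens by announcing that it uses the closed-form solutions from that subsection). There the aggregate PDE is Burgers' equation (\ref{cum_solution_rn}), and the explicit formula of Lemma~\ref{Burgers} with $G\equiv 0$ gives $v\equiv 0$ immediately, with uniqueness built into the closed form. Your argument via (\ref{cumspeed_rn}) handles general $g$ satisfying Assumption~\ref{assumption_g}, which is a nice extension.

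Second, your uniqueness justification does not quite close. Proposition~\ref{prop_existence_rn} tells you the \emph{system} $(v^1,\dots,v^N)$ is unique, hence the aggregate $v=\sum_i v^i$ is uniquely determined; but it does not say that the \emph{scalar} aggregate PDE has a unique solution. You have exhibited $v\equiv 0$ as a solution of the scalar PDE, not as the sum of a solution to the full system, so ``summing any two hypothetical solutions of the system produces the same aggregate $v$'' does not yet force that aggregate to be zero. The fix is easy: the aggregate equation is a single quasilinear parabolic PDE with smooth, locally Lipschitz nonlinearity in $v_p$ (by the implicit function theorem applied to (\ref{cumspeed_rn})), so the same fixed-point argument as in Appendix~A --- or Theorem IV.8.1 in Ladyzenskaja (1968), already cited after Proposition~\ref{prop_existence_rn} --- gives uniqueness directly. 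In the linear case one can instead just invoke Lemma~\ref{Burgers}, as the paper does.
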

\begin{proof}
    If $ \sum_{i=1}^N H^i = 0 $ then (\ref{speed}) and (\ref{cum_solution_rn}) imply that the aggregate value function $v = \sum_{i=1}^N v^i$ equals zero. It follows from (\ref{speed}) that the aggregate trading speed vanishes.
\end{proof}

Loosely speaking in a zero-sum game, if all option traders are risk neutral and willing to move the market in their favor, their combined effect cancels. We note that this is no longer true for general utility functions, as illustrated in figure \ref{bild_2p_cara_call}. Of course, in reality market manipulation is illegal and  many investors are unable to manipulate the underlying in the first place. This is why we now look at the following asymmetric situation: The option issuer, Player 0, does not trade the underlying; her competitor, Player 1, owns the derivative $H^1 \neq 0$ and intends to move the stock price to her favor. In addition, there are $N-1$ informed investors without option endowment in the market. They are ''predators'' that may supply liquidity and thus reduce the first player's market impact, cf. Carlin, Lobo \& Viswanathan (2007) and Schoeneborn \& Schied (2007). We find that the more informed competitors are active, the less aggregate manipulation will occur.

\begin{cor}
    Let $H^1 \neq 0$ and $ H^i = 0 $ for $ i = 2, ..., N$. Then $ \lim_{ N \to \infty} \sum_{i=1}^N \dot X^i_t = 0 . $
\end{cor}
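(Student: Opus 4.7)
The plan is to exploit the closed--form solution in the risk--neutral linear--cost case and to establish a uniform bound on the derivative of the aggregate value function.

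First I would recall that under the hypothesis all agents are risk neutral and $g(z)=\kappa z$, so equation (\ref{speed}) yields the identity
\[
    \sum_{i=1}^N \dot X^i_t \;=\; \frac{\lambda}{\kappa(N+1)}\, v_p(t,P_t),
\]
where $v=\sum_{i=1}^N v^i$ is the aggregate value function. With $H^i\equiv 0$ for $i\ge 2$ the terminal condition of the Burgers--type PDE (\ref{cum_solution_rn}) becomes simply $v(T,p)=H^1(p)$. Hence everything comes down to controlling $|v_p|$ uniformly in $N$ and multiplying by the factor $\lambda/\kappa(N+1)$, which tends to zero.

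Next I would apply Lemma~\ref{Burgers} with $A=\sigma^{2}$ and $B=2\lambda^{2}N/\kappa(N+1)^{2}$, setting $c_N \triangleq B/A$. Writing the result in a probabilistic form and letting $Z\sim\mathcal N(0,1)$, the solution reads
\[
    v(t,p)=\frac{1}{c_N}\log \E\!\left[\exp\!\left(c_N H^{1}\!\big(p+\sigma\sqrt{T-t}\,Z\big)\right)\right].
\]
Differentiating in $p$ and noticing that differentiation and expectation commute (since $H^1$ and $H^1_p$ are bounded) gives
\[
    v_p(t,p)=\frac{\E\!\left[H^{1}_{p}\!\big(p+\sigma\sqrt{T-t}\,Z\big)\,e^{c_N H^{1}(\cdots)}\right]}{\E\!\left[e^{c_N H^{1}(\cdots)}\right]}.
\]
This is an exponentially tilted average of $H^{1}_p$, so irrespective of $c_N$,
\[
    |v_p(t,p)| \;\le\; \|H^{1}_p\|_{\infty}.
\]

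Finally, combining the two displays yields the estimate
\[
    \left|\sum_{i=1}^N \dot X^i_t\right| \;\le\; \frac{\lambda}{\kappa(N+1)}\,\|H^{1}_p\|_{\infty}\;\xrightarrow[N\to\infty]{}\;0,
\]
which is the claim. The only step requiring care is justifying the uniform bound on $v_p$: the key observation is that although the PDE coefficient $c_N$ depends on $N$, the exponential--tilting representation always expresses $v_p$ as a convex combination of values of $H^1_p$, so boundedness of $H^1_p$ (which is part of our standing assumption on the payoffs) gives an $N$--free bound. The rest is a one-line limit.
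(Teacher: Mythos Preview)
Your proof is correct and follows the same strategy as the paper: use the identity $\sum_i \dot X^i_t = \frac{\lambda}{\kappa(N+1)} v_p$ from (\ref{speed}) together with an $N$-uniform bound on $v_p$, then let $N\to\infty$. The paper simply asserts that the solution $v$ of (\ref{cum_solution_rn}) with terminal condition $H^1$ is bounded uniformly in $N$ and concludes; you supply the missing justification by writing out the Cole--Hopf representation from Lemma~\ref{Burgers} and recognising $v_p$ as an exponentially tilted average of $H^1_p$, giving the clean bound $|v_p|\le\|H^1_p\|_\infty$. This is a genuine improvement in rigour, since strictly speaking it is $v_p$, not $v$, that must be controlled uniformly in $N$.
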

\begin{proof}
    The solution $v$ to the Cauchy problem (\ref{cum_solution_rn}) with terminal condition
        $$ v(T,p) = \sum_{i=1}^N H^i(p) = H^1(p) $$
    is bounded uniformly in $N$. Equation (\ref{speed}) yields
        $$ \sum_{i=1}^N \dot X^i = \frac{\lambda}{\kappa} \frac{1}{N+1} v_p \stackrel{N \to \infty}{\rightarrow} 0 . $$
\end{proof}

Let us modify the preceding setting a little. Again, Player 0 issues a product $H$ and does not intend to manipulate the underlying, while her competitors do. More precisely, assume that player $0$ splits the product $H$ into pieces and sells them to $N$ risk neutral competitors, such that each of them gets $\frac{1}{N} H$. We find that their aggregate trading speed $\sum_{i=1}^N \dot X^i$ is decreasing in the number of competitors $N$. Consequently, the option issuer should sell her product to as many investors as possible in order to avoid being outsmarted. We illustrate this result in figure \ref{Np_rn_call_cumspeed}, which shows the aggregate trading speed at time $t=0$ of $N$ players each holding $1/N$ option shares.

\begin{cor}
    Let $H^i = \frac{1}{N} H$ for $i = 1,..., N$. Then for each $t \in (0,T]$ the aggregate trading speed $ \sum_{i=1}^N \dot X^i_t $ is decreasing in $N$ and $ \lim_{ N \to \infty } \sum_{i=1}^N \dot X^i_t = 0 . $
\end{cor}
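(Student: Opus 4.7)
The key observation is that when $H^i = \tfrac{1}{N} H$, the aggregate terminal datum $\sum_i H^i = H$ is independent of $N$. Hence the aggregate value function $v^{(N)} := \sum_i v^i$ solves the Cauchy problem (\ref{cum_solution_rn}) with the same terminal condition $v^{(N)}(T,\cdot) = H$ for every $N$, and only the quadratic coefficient $\gamma_N := \lambda^2 N/(\kappa(N+1)^2)$ depends on $N$. Combined with (\ref{speed}), this reduces the proof to controlling
\[
    \sum_{i=1}^N \dot X^i_t \;=\; \frac{\lambda}{\kappa(N+1)}\, v^{(N)}_p(t, P_t)
\]
as $N$ varies.

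My first step would be to apply Lemma \ref{Burgers} with $A = \sigma^2$ and $B = 2\gamma_N$ to obtain the Cole--Hopf representation
\[
    v^{(N)}(t,p) \;=\; \frac{1}{c_N}\,\log \EE_{t,p}\!\bigl[\exp(c_N H(P_T))\bigr], \qquad c_N \;:=\; \frac{2\lambda^2 N}{\sigma^2 \kappa (N+1)^2},
\]
where $P_s = p + \sigma(B_s - B_t)$ under the reference Wiener measure. Differentiating in $p$ (justified by $H\in\mathcal C^2_b$) identifies $v^{(N)}_p$ with the expectation of $H'(P_T)$ under the tilted measure with density $\propto \exp(c_N H(P_T))$, whence
\[
    \bigl|v^{(N)}_p(t,p)\bigr| \;=\; \bigl|\EE^{(N)}_{t,p}[H'(P_T)]\bigr| \;\leq\; \|H'\|_\infty
\]
uniformly in $N$. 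Plugging this uniform bound into the formula above for the aggregate speed yields $\bigl|\sum_i \dot X^i_t\bigr| \leq \lambda \|H'\|_\infty / (\kappa(N+1))$, which tends to zero as $N \to \infty$ and settles the limit claim.

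For the monotonicity I would combine two ingredients: the prefactor $\lambda/(\kappa(N+1))$ is strictly decreasing in $N$, and so is $c_N$ for $N \geq 1$ (since $\tfrac{d}{dN}\tfrac{N}{(N+1)^2} = \tfrac{1-N}{(N+1)^3} \leq 0$). A short calculation from the tilted-expectation representation gives
\[
    \frac{d}{dc}\, \EE^{(c)}_{t,p}\!\bigl[H'(P_T)\bigr] \;=\; \operatorname{Cov}^{(c)}_{t,p}\!\bigl(H'(P_T),\, H(P_T)\bigr),
\]
which has a definite sign whenever $H'$ and $H$ are co-monotone --- in particular for the monotone convex or concave payoffs of practical interest such as calls and digitals. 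Under such a structural condition $\EE^{(N)}_{t,p}[H'(P_T)]$ retains its sign and is monotone in $N$, and together with the $1/(N+1)$ prefactor this delivers strict monotonicity of $\sum_i \dot X^i_t$ in $N$. The monotonicity claim in full generality is the main obstacle: for a generic bounded payoff the covariance above need not have a fixed sign, so a general proof would most likely proceed via a PDE comparison argument applied directly to (\ref{cum_solution_rn}), treating $N$ as a continuous parameter and studying the sign of $\partial_N v^{(N)}_p$ through the resulting linear parabolic equation.
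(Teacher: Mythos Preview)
Your argument for the limit is correct and is in fact more detailed than the paper's own proof, which simply asserts that the solution $v^{(N)}$ to (\ref{cum_solution_rn}) with terminal datum $H$ is bounded uniformly in $N$ and then reads off $\sum_i \dot X^i_t = \frac{\lambda}{\kappa(N+1)} v^{(N)}_p \to 0$ from (\ref{speed}). Your Cole--Hopf computation supplies the missing step: it shows concretely that $|v^{(N)}_p| \le \|H'\|_\infty$ uniformly in $N$, which is exactly the uniform bound the paper takes for granted.

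For the monotonicity claim you do more than the paper, and your caution is warranted. The paper's proof of this corollary is a single sentence (``the assertion follows from equations (\ref{cum_solution_rn}) and (\ref{speed})'') and offers no mechanism by which monotonicity in $N$ would hold for a general bounded payoff. Your decomposition---the prefactor $1/(N+1)$ is decreasing, $c_N$ is decreasing for $N\ge 1$, and $\partial_c\,\EE^{(c)}[H'(P_T)] = \mathrm{Cov}^{(c)}(H'(P_T),H(P_T))$---cleanly isolates what is needed: a sign condition on this covariance. For calls, digitals and other monotone payoffs with $H'\ge 0$ and $H,H'$ co-monotone, this covariance is nonnegative, $\EE^{(c_N)}[H'(P_T)]\ge 0$ is non-increasing in $N$, and the product with $1/(N+1)$ is then decreasing as claimed. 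These are precisely the examples the paper illustrates numerically, so your conditional argument actually covers the cases the authors have in mind. You are right that for an arbitrary $H$ the covariance need not have a fixed sign, so the monotonicity statement as written is not justified in full generality by the paper either; your suggestion of a PDE comparison argument is a reasonable direction, but the paper does not supply one.
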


\begin{proof}
    As in the preceding corollary, the assertion follows from equations (\ref{cum_solution_rn}) and (\ref{speed}).
\end{proof}

    \begin{figure}
        \subfloat[Call]{ \includegraphics[width = 0.5 \textwidth]{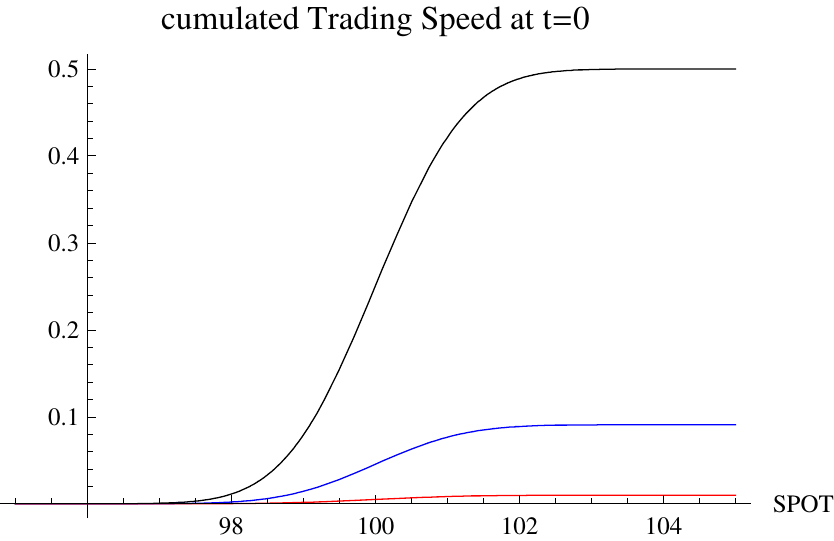} } \hfill
        \subfloat[Digital]{ \includegraphics[width = 0.5 \textwidth]{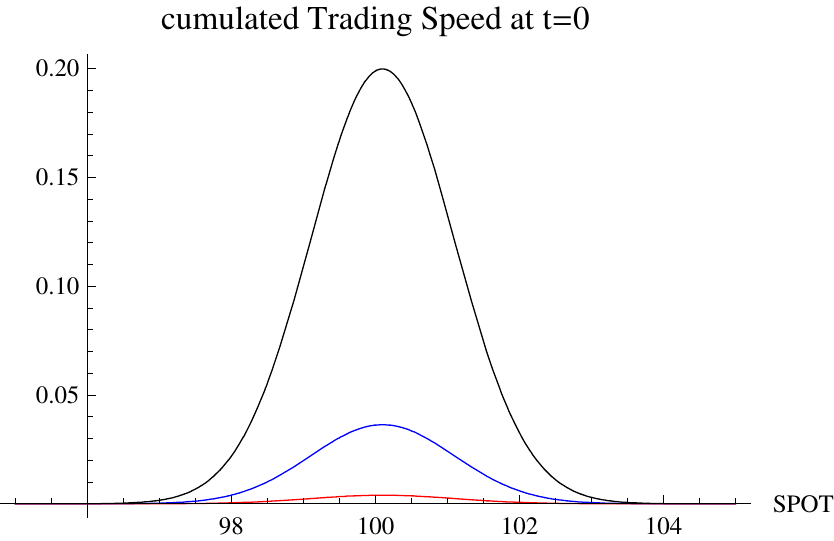} } \\
        \caption[]{\small Aggregate trading speed $\dot X^*$ at time $t=0$ for $N = $ 1 (black), 10 (blue), 100 (red) players each holding $1/N$ shares of a Call (left) and Digital (right) option with strike $K=100$. The more agents, the less aggregate manipulation.}
        \label{Np_rn_call_cumspeed}
    \end{figure}

The preceding results indicate how an option issuer can prevent her competitors from manipulation. One strategy is public announcement of the transaction: the more informed liquidity suppliers on the market, the smaller the impact on the underlying. A second strategy is splitting the product into pieces; the more option writers, the less manipulation. Let us conclude this section with a surprisingly simple way to avoid manipulation: using options with physical delivery.

\begin{rem}
    Calls, Puts and Forwards with physical delivery do not induce stock price manipulation.
\end{rem}

\begin{proof}
    Consider a risk neutral agent who owns $\Theta > 0$ Call options with physical delivery and strike $K$. As above, we denote by $X$ her strategy in the underlying. At maturity, she exercises $ 0 \leq \theta \leq \Theta $ of her Call options. Problem (\ref{problem}) turns into:
    \begin{eqnarray*}
        & & \sup_{ X \in \fX, \theta \leq \Theta } \EE \left[ \int_0^T - \dot X_t ( P_t + \kappa \dot X_t) dt + (X_T + \theta) \left( P_T - \frac{1}{2} \lambda (X_T + \theta) \right) - \theta K \right] \\
        &=& \sup_{ X \in \fX, \theta \leq \Theta } \EE \left[ \int_0^T - \kappa \dot X_t^2 dt + \theta \left( P_T - \frac{1}{2} \lambda \theta \right) - \theta K  \right] \\
        &=& \sup_{ X \in \fX } \EE \left[ \int_0^T - \kappa \dot X_t^2 dt \right] + \sup_{ \theta \leq \Theta} \left[ \theta \left( P_T - \frac{1}{2} \lambda \theta \right) - \theta K \right] \\
        &=& \sup_{ \theta \leq \Theta} \left[ \theta \left( P_T - \frac{1}{2} \lambda \theta \right) - \theta K \right]
    \end{eqnarray*}
    where the first term in the first line describes the expected trading costs in $[0,T]$ and the liquidation value of $\theta + X_T $ stock shares at maturity. The optimal trading strategy is $X \equiv 0$. The same holds true for Put options and Forward options with physical delivery.
\end{proof}

    \section{Conclusion}
We investigated the strategic behavior of option holders in illiquid markets. If trading the underlying has a permanent impact on the stock price, the possession of derivatives with cash delivery may induce market manipulation. We showed the existence and uniqueness of optimal trading strategies in continuous time and for a general cost function; in the one player framework for general utility functions, and in the multi-player case for risk neutral as well as CARA investors. Moreover, we showed how market manipulation can be reduced.

Our work may be extended in several directions. Foremost, we derived our results under Assumption \ref{assumption}. This assumption is only satisfied in the single-player risk-neutral case where the expected costs of buying a portfolio over a finite time interval under market impact equals its expected liquidation value under infinitely slow liquidation and does not hold in general. The problem of defining a proper notion of liquidation value under strategic interaction is important but was not our focus and is left for future research. Furthermore, it would be interesting to consider American or path-dependent options as well as more sophisticated market impact models such as Obizhaeva \& Wang (2006) that account for resilience effects and temporary price impacts.

\begin{appendix}

    \section{An Existence Result}
In this section, we prove Propositions \ref{prop_existence_rn} and \ref{prop_existence_cara} where the PDE (\ref{system_rn}) in the risk neutral setting is a special case of the system (\ref{HJB_exp_transformed}) for risk averse agents, with $\alpha^j = 0$ for each $j$. In order to establish our existence and uniqueness of equilibrium result, we adopt the proof of Proposition 15.1.1 in Taylor (1997) to our framework. After time inversion from $t$ to $T - t$ both systems of PDEs are of the form
    \begin{equation} \label{PDE}
        v_t = Lv + F(v_p)
    \end{equation}
for $v \set \left( v^1, ..., v^N \right) $, where $L$ is the Laplace-operator
    $$ L = \frac{1}{2} \sigma^2 \frac{\partial^2}{\partial p^2} $$
and $F = \left( F^1, ..., F^N \right) $ is of the form
    $$ F^j(v_p) = - \frac{1}{2} \sigma^2 \alpha^j \left(  v^j_p \right)^2 + \lambda \dot X^* v^j_p - \dot X^j g \left( \dot X^* \right). $$
Here $\dot X^*$ and $\dot X^j$ are given implicitly by (\ref{strategy_j_rn}) and (\ref{cumspeed_rn}). The initial condition is
    \begin{equation} \label{initial}
        v(0, p) = H (p) = \left( H^1, ..., H^N \right) .
    \end{equation}
We rewrite (\ref{PDE}) in terms of an integral equation as
    \begin{equation} \label{PDE_transf}
        v(t) = e^{t L} + \int_0^t e^{(t-s)L} F(v_p(s)) ds \set \Psi v(t) .
    \end{equation}
and seek a fixed point of the operator $\Psi$ on the following set of functions:
    $$ \XX = \mathcal C^1_b (\RR, \RR^N) \set \left\{ v \in \mathcal C^1 (\RR, \RR^N) \mid v, v_p \text{ bounded} \right\} $$
equipped with the norm
    $$ \norm{v}_{\XX} \set \norm{v}_{\infty} + \norm{v_p}_{\infty}  . $$
We set $\YY \set \mathcal C_b$. Note that $\XX$ and $\YY$ are Banach spaces and the semi-group $e^{tL}$ associated with the Laplace operator is strongly continuous on $\XX$, sends $\YY$ on $\XX$ and satisfies
\[
    \norm{e^{t L}}_{\mathcal L ( \YY, \XX ) } \leq C t^{-\gamma}
\]
for some $C > 0$, $\gamma < 1$ and $t \leq 1$. Furthermore, the nonlinearity $F$ is locally Lipschitz and belongs to $\mathcal C^{\infty}$. Indeed, the map $ a \mapsto \dot X^*(a) $ is $\mathcal C^{\infty}$, due to the implicit function theorem with first derivative
        $$ \frac{\partial}{\partial v_p} \dot X^* (v_p) = \frac{ \lambda }{ (N+1) g'( \dot X^* (v_p) - \dot X^* (v_p) g''(\dot X^* (v_p)) } $$
    where the denominator is positive due to Assumption \ref{assumption_g}. The cost function $g$ is $\mathcal C^{\infty}$ by assumption. In particular, the assumptions of Proposition 15.1.1 in Taylor (1997) are satisfied.

The a-priori estimates of Proposition \ref{boundedstrategy} yield that, if a solution $v$ to (\ref{PDE}) exists, it is bounded in the sense $\norm{v}_{\XX} \leq K$. Therefore, we define
\[
    \XX_K \set \left\{ v  \in \XX \mid \norm{ v }_{\XX} \leq K \right\}
\]
and choosing $K$ large enough we may assume that the initial condition satisfies $H \in \XX_K$.

We are now ready to prove existence and uniqueness of a solution to (\ref{PDE_transf}). Proposition 15.1.1 in Taylor (1997) gives a solution for a small time horizon $[0, \tau]$, with $\tau>0$ specified below. We apply his argument recursively to extend the solution to $[0, T]$.

    \begin{prop}
        There is $\tau > 0$ such that for each $n \in \NN_0$, the PDE (\ref{PDE_transf}) with initial condition (\ref{initial}) admits a unique classical, bounded solution in $\XX_K$ on the time horizon $[0, n \tau \wedge T]$. This solution coincides with the value function.
    \end{prop}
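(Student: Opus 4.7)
The plan is to prove the proposition by induction on $n$, invoking Taylor's Proposition 15.1.1 at each step to extend the solution by a fixed time increment $\tau$, and using the a-priori bound from Lemma \ref{boundedstrategy} to keep the iteration inside the ball $\XX_K$.

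For $n=0$ the statement is vacuous. For the base step from $n=0$ to $n=1$, the discussion preceding the proposition has already checked the hypotheses of Taylor's local existence theorem: the semigroup estimate $\norm{e^{tL}}_{\mathcal L(\YY,\XX)} \leq C t^{-\gamma}$ with $\gamma < 1$ is available, and the nonlinearity $F$ is locally Lipschitz on $\XX_K$ because $a \mapsto \dot X^*(a)$ is $\mathcal C^\infty$ via the implicit function theorem applied to (\ref{cumspeed_cara}) (using strict monotonicity of $z \mapsto Ng(z) + zg'(z)$ from Assumption \ref{assumption_g}), $\dot X^j$ is then $\mathcal C^\infty$ in the $v^i_p$'s via (\ref{strategy_j_cara}) (using $g' > \epsilon$), and $g$ itself is smooth. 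A standard Picard/contraction argument on the operator $\Psi$ then produces $\tau > 0$, depending only on $K$, $C$, $\gamma$ and the Lipschitz constant of $F$ on $\XX_K$ but \emph{not} on the particular initial datum in $\XX_K$, such that for every $w \in \XX_K$ the map $v \mapsto e^{tL} w + \int_0^t e^{(t-s)L} F(v_p(s))\, ds$ is a strict contraction on a suitable closed subset of $\XX_K$ over $[0,\tau]$. Taking $w = H \in \XX_K$ settles the case $n=1$.

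For the inductive step from $n$ to $n+1$, suppose a solution $v$ has been constructed on $[0, n\tau \wedge T]$. Lemma \ref{boundedstrategy}, together with the boundedness of the payoffs $H^j$ and the remark following it, yields $\norm{v(n\tau \wedge T)}_\XX \leq K$, so we may restart the contraction argument at time $n\tau \wedge T$ with that value as initial condition and glue the resulting piece to $v$, producing an extension to $[0, (n+1)\tau \wedge T]$; the crucial point is that the step size $\tau$ is uniform over $\XX_K$, so the same increment works at every iteration. Uniqueness propagates by local uniqueness and continuation. Finally, to identify $v$ with the vector of value functions I would invoke the verification theorem (Fleming \& Soner (1993), Theorem IV.3.1, in its multi-player adaptation): the feedback controls $\dot X^j$ from (\ref{strategy_j_cara}) are admissible because bounded by Lemma \ref{boundedstrategy}, and since $v \in \mathcal C^{1,2}$ satisfies the HJB system pointwise with terminal data $H^j$, a standard It\^o/martingale argument shows $v^j$ coincides with the value function of player $j$ under the Nash profile $(\dot X^1, \dots, \dot X^N)$. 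The main obstacle is the uniformity of $\tau$: the whole recursion hinges on the Lipschitz constant of $F$ on $\XX_K$ combined with the semigroup smoothing yielding a step size depending only on $K$ and not on the starting point, otherwise $\tau$ could degenerate along the iteration and fail to cover $[0, T]$ in finitely many steps.
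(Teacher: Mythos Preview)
Your proposal is correct and follows essentially the same approach as the paper: induction on $n$, Taylor's local existence/contraction argument with a step size $\tau$ uniform over $\XX_K$, gluing, and verification via Fleming \& Soner to identify with the value functions so that the a-priori bound of Lemma \ref{boundedstrategy} returns the solution to $\XX_K$ for the next step. The paper makes two points more explicit than you do: it invokes Taylor's Proposition 15.1.2 to upgrade the fixed point from $\mathcal C([n\tau,(n+1)\tau],\XX)=\mathcal C^{0,1}$ to $\mathcal C^{1,2}$ (which you assert without justification), and it applies the verification theorem \emph{inside} each inductive step---not only at the end---because without that identification one only has $\norm{v^{(n+1)}}_\XX \leq K+\delta$, which would not suffice to keep $\tau$ uniform across the recursion.
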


\begin{proof}
    \begin{enumerate}
        \item
            For $n = 0$, there is nothing to prove. Pick $n \in \NN$ such that $n \tau < T$. By induction, we can assume that there is a solution $v^{(n)} \in \XX_K$ on the time horizon $[0, n \tau]$. In particular, the initial condition for the next recursion step $h^{(n)} \set v^{(n)} (n \tau)$ is in $\XX_K $.
        \item
        Fix $\delta > 0$. We construct a short time solution on the following set of functions:
            $$ Z^{(n+1)} \set \left\{ v \in \mathcal C \left( [ n \tau , (n+1) \tau ], \XX \right) \mid v( n \tau ) = h^{(n)}, \norm{u(t) - h^{(n)}}_{\XX} \leq \delta \; \forall t \right\} . $$
        To this end, we first show that $ \Psi: Z ^{(n+1)} \to Z ^{(n+1)}$ is a contraction, if $\tau > 0$ is chosen small enough.

                        For this, let $\tau_1$ be small enough such that for $ t \leq \tau_1$ and any $v \in \XX_K$ we have
                            $$ \norm{e^{t L} v - v }_{\XX} \leq \frac{1}{2} \delta . $$
                        Here we used that $e^{t L}$ is a continuous semigroup and $\norm{ v }_{\XX} \leq K $. In particular, for $v = h^{(n)}$:
                            $$ \norm{e^{t L} h^{(n)} - h^{(n)} }_{\XX} \leq \frac{1}{2} \delta . $$

                        For $v \in Z ^{(n+1)}$, the derivative $v_p$ is uniformly bounded in the sense $ \norm{v_p}_{\infty} \leq \norm{ h^{(n)} }_{\XX} + \delta \leq K + \delta$. Hence, we only evaluate $F$ on compact sets. By assumption, $F$ is locally Lipschitz. In particular, $F$ is Lipschitz on compact sets. In other words, there is a constant $K_1$ such that for any $v, w \in Z^{(n+1)}$ we have
                                $$ \norm{ F(v_p) - F(w_p) }_{\YY} \leq K_1 \norm{ v - w }_{\XX} $$
                        This implies, for $ w = h^{(n)} $
                            \begin{eqnarray*}
                                \norm{F(v_p)}_{\YY} &\leq & \norm{F( h^{(n)}_p )}_{\YY} + K_1 \norm{ v - h^{(n)} }_{\XX} \\
                                            & \leq & K + K_1 \delta \\
                                            & \set & K_2 .
                            \end{eqnarray*}

                        This, together with the boundedness assumption on $e^{t L}$, yields
                            \begin{eqnarray*}
                                \norm{ \int_{n \tau}^t e^{(t-y)L}  F( v_p (y) ) dy }_{\XX}
                                    & \leq &    t \norm{e^{t L}} \sup_{n \tau \leq y \leq t} \norm{ F(v_p(y)) }_{\YY} \\
                                    & \leq & t^{1-\gamma} C K_2 .
                            \end{eqnarray*}
                        This quantity is $\leq \frac{1}{2} \delta$ if $t \leq \tau_2 \set \left( \frac{\delta}{2 C K_2} \right)^{\frac{1}{1-\gamma}} $.

                        Finally, it follows that for $v \in Z ^{(n+1)}$ we have
                            \begin{eqnarray*}
                                \norm{\Psi v - h^{(n)} }_{\XX}
                                    & \leq & \norm{ e^{t L} h^{(n)} - h^{(n)} }_{\XX} + \norm{ \int_{n \tau}^t e^{ (t-y)L}  F(v_p(y) ) dy }_{\XX} \\
                                    & \leq & \frac{1}{2} \delta + \frac{1}{2} \delta = \delta .
                            \end{eqnarray*}
                        This shows that $\Psi$ maps $Z ^{(n+1)} $ into itself.

                        It remains to show that $\Psi$ is a contraction. Let $v, w \in Z ^{(n+1)}$. Then
                            \begin{eqnarray*}
                                \norm{ \Psi v(t) - \Psi w(t)}_{\XX} &=& \norm{ \int_{n \tau}^t e^{(t-y)L} \left[ F (v_p(y)) - F (w_p(y)) \right] dy }_{\XX} \\
                                    & \leq & t \norm{ e^{t L} } \sup_{n \tau \leq y \leq t} \norm{ F (v_p(y)) - F (w_p(y)) }_{\YY} \\
                                    & \leq & t^{1-\gamma} C K_2 \sup_{n \tau \leq y \leq t} \norm{v(y) - w(y)}_{\XX}
                            \end{eqnarray*}
                            The quantity $ t^{1-\gamma} C K_2$ is $\leq \frac{1}{2}$ if $ t \leq \tau_3 \set \left( \frac{1}{ 2 C K_2} \right)^{\frac{1}{1-\gamma}}$.
                        This proofs that $\Psi$ is a contraction in $Z ^{(n+1)} $, if $\tau$ is small in the sense
                            $$ 0 < \tau \set \min \{ \tau_1, \tau_2, \tau_3 \} . $$
                        Note that the time step $\tau$ does not depend on $n$. It is the same in every recursion step.

        \item It follows that $\Psi$ has a unique fix point $v $ in $Z ^{(n+1)} $. In other words, we constructed a function $v \in \mathcal C ( [n \tau, (n+1) \tau], \XX ) = \mathcal C^{0,1} [n \tau , (n+1) \tau] $ which solves the PDE (\ref{PDE_transf}) with initial condition $v(s) = h^{(n)} = v^{(n)} (n \tau)$ on the time interval $ [n \tau, (n+1) \tau] $.

        This solution is actually in $\mathcal C^{1,2} \left( ( n \tau, (n+1) \tau ] \times \RR, \RR^N \right) $, due to Proposition 15.1.2 in Taylor (1997). Furthermore, $v$ is bounded by construction. Indeed, $\norm{v}_{\infty} \leq \norm{ h^{(n)} }_{\XX} + \delta \leq K + \delta $. We define the new solution as
            $$ v^{(n+1)} \set v^{(n)} \mathds 1_{ \{ 0 \leq t \leq n \tau \} } + v \mathds 1_{ \{ n \tau < t \leq (n+1) \tau \} } . $$
        By construction, $v^{(n+1)}$ solves (\ref{PDE_transf}) on the time horizon $[0, (n+1) \tau]$ and is bounded and in $\mathcal C^{1,2}$. Hence, we can apply the Verification Theorem IV.3.1 from Fleming \& Soner (1993), which yields that $ v^{(n+1)} $ coincides with the vector of value functions (up to time reversal and an exponential transformation, if $\alpha^j > 0$). The a priori estimate in Proposition \ref{boundedstrategy} yields that $v^{(n+1)} \in \XX_K$. In particular, $ \norm{ v^{(n+1)} ( (n+1) \tau) }_{\XX} \leq K$, which is necessary for the next recursion step.
    \end{enumerate}
    This completes the proof.
\end{proof}

\end{appendix}


\end{document}